\newtheorem{theorem}{Theorem}
\journal{Mathematical Biosciences}
\newcommand{\ddt}[1]{\partial_t{#1}}
\newcommand{\dda}[1]{\partial_a{#1}}
\newcommand{\ddx}[1]{\partial_x{#1}}
\newcommand{\ddy}[1]{\partial_y{#1}}
\renewcommand{\epsilon}{\varepsilon}
\renewcommand{\theta}{\vartheta}
\newcommand{\wB}{w^B}
\newcommand{\WB}{W^B}
\newcommand{\RM}{\mathcal{R}_0^M}
\newcommand{\RW}{\mathcal{R}_0^W}
\newcommand{\RWA}{\widehat{\mathcal{R}}_0^W}
\begin{document}

\begin{frontmatter}



\title{A mathematical view on head lice infestations}


\author{Noemi Castelletti}
\address{Institute of Radiation Medicine, Helmholtz Zentrum M\"unchen, Ingolst\"adter Landstra{\ss}e~1,
D-85764 Neuherberg, Germany}
\ead{noemi.castelletti@helmholtz-muenchen.de}

\author{Maria Vittoria Barbarossa}
\address{Institute of Applied Mathematics, Heidelberg University, Im Neuenheimer Feld 205, D-69120 Heidelberg, Germany}
\ead{barbarossa@uni-heidelberg.de}


\begin{abstract}
Commonly known as head lice, \textit{Pediculus humanus capitis} are human ectoparasites which cause infestations in children worldwide. Understanding the life cycle of head lice is an important step in knowing how to treat lice infestations, as the parasite behavior depends considerably on its age and gender. In this work we propose a mathematical model for head lice population dynamics in hosts who could be or not quarantined and treated. Considering a lice population structured by age and gender we formulate the model as a system of hyperbolic PDEs, which can be reduced to compartmental systems of delay or ordinary differential equations. Besides studying fundamental properties of the model, such as existence, uniqueness and nonnegativity of solutions, we show the existence of (in certain cases multiple) equilibria at which the infestation persists on the host's head. Aiming to assess the performance of treatments against head lice infestations, by mean of computer experiments and numerical simulations we investigate four possible treatment strategies. Our main results can be summarized as follows: (i) early detection is crucial for quick and efficient eradication of lice infestations; (ii) dimeticone-based products applied every 4 days effectively remove lice in at most three applications even in case of severe infestations and (iii) minimization of the reinfection risk, e.g. by mean of synchronized treatments in families/classrooms is recommended.
\end{abstract}

\begin{keyword}
head lice \sep population dynamics \sep differential equations \sep structured populations \sep {stability analysis} \sep {delay equations} 

\MSC  	92D25   \sep  	34K20 \sep  	34K17 \sep 34C60
\end{keyword}

\end{frontmatter}


\section{Introduction}
\label{sec:Introduction}
\textit{Pediculus humanus capitis}, commonly known as head lice, are obligate exclusively human ectoparasites, source of annoying infestations in children worldwide~\cite{Cummings2018,Feldmeier2014}. The main head lice transmission route being close head-to-head contact~\cite{Speare2002,Meister2016}, pediculus capitis epidemics occur mostly in schools and kindergartens, independently on the country of origin, ethnic groups and socio-economic status of the host~\cite{Feldmeier2012}. 	

Wingless and up to 4 mm long head lice live on the human scalp, feeding 4-8 times a day by  sucking  blood from the host and  injecting  saliva  simultaneously~\cite{TakanoLee2003,Cummings2018}. The life cycle of the louse is structured into five stages. After mating, a breeding females can lay up to 6 eggs per day for 30 days~\cite{Cummings2018}, close to the scalp, where temperature and humidity are optimal for their growth. Lice eggs (also called nits) hatch  6 to 11 days after ovoposition  into  nymphs  that  molt  twice  over  the  next  8 to 10 days to  become  adult lice. In contrast to nymphs after the first and second molt, first stage nymphs are not motile~\cite{TakanoLee2003}. Differentiation into female or male louse occurs after the third molt, when the insect becomes adult and sexually active. During mating both female and male louse might die~\cite{Liceworld}. Adult insects live about 30 days~\cite{TakanoLee2003}, but can survive for only 1 to 2 days away from the human host~\cite{Burkhart2003}.\\
\ \\
Pediculosis-induced itching occurs when the host develops an allergic reaction to the lice saliva, usually four to six weeks from the beginning of the infestation. Diagnosis of head lice infestations is based on the detection of adults,
nymphs and/or viable eggs on the host hair and scalp. Well-established treatment options for head lice infestations include therapeutic wet combing, topical application of a pediculicide, and oral treatments~\cite{Feldmeier2012}, the last not being considered in our study.
Wet combing is a non-chemical treatment, optimal for detection of head lice infestations~\cite{Gallardo2013}, but very time consuming when performed for taking advantage of its therapeutic effects~\cite{Feldmeier2012}. Most pediculicides, such as those based on malathion, pyrethrins and its synthetic derivates, can kill nymphs and adult lice, but are in general non-ovocidal. Extensive use of these compounds has led to the development of resistant head lice populations~\cite{Feldmeier2012,Cummings2018}. In contrast, dimeticone-based pediculicides showed moderate to high efficacy against live lice and eggs and the development of resistance to such products is less likely~\cite{Feldmeier2012}.

Whereas mathematical models for the dynamics and control of vector-borne diseases, such as mosquitoes or ticks, are well established (see e.g. Ch.4 in \cite{Martcheva2015} for an introductory overview), to the best of our knowledge only two groups have previously proposed mathematical approaches for understanding the spread of pediculosis. An epidemic model for transmission among hosts based on a stochastic SIS approach was suggested by Stone et al.~\cite{Stone2008}. This model describes only the macroscopic level (host interactions) and it does not consider the biology of the lice life cycle. Laguna and Risau-Gusman~\cite{Laguna2011} proposed a discrete model based on Leslie-Lefkovitch matrices and studied growth and interactions of colonies of head lice. This study was used in computer simulations to estimate the efficacy of different control strategies on the growth of the lice colony. In a recent work of the same authors~\cite{Laguna2018} the mathematical model was combined with data collected from schools in order to estimate the most likely events that can give rise to infestations.\\
\ \\
We propose here a mechanistic mathematical model for understanding the biology of the life cycle of head lice and assessing the efficiency of different treatments to eradicate lice infestations. Our first and more general approach is based on structured populations which are continuous in time and age, hence hyperbolic partial differential equations (PDEs). In contrast to the model by Laguna and Risau-Gusman~\cite{Laguna2011} we explicitly include the dynamics of the male lice and propose a mating function for pair formation. In Sect.~\ref{sec:model} we show under which conditions our PDE model can be reduced to systems of delay differential equations or ordinary differential equations. The latter are first analyzed (Sect.~\ref{sec:analysis}) and then used for computer experiments and numerical simulations (Sect.~\ref{sec:treat_simulax}) to investigate the efficacy of four possible treatments against head lice.

\section{Modeling head lice life cycle and transmission}
\label{sec:model}
In this section we propose mathematical models for head lice infestations based on the biology of the lice life cycle.  We first consider a lice population in an isolated environment, such as the head of an infected quarantined host. In a second step we extend the models to include lice transmission between hosts.

\subsection{Populations structured by age}
\label{sec:PDEmodel}
One possibility for modeling the lice life cycle is to use continuous age structures~\cite{Cushing1998}. Hoppensteadt~\cite{Hoppensteadt1975} introduced a mathematical model for a population structured by age with distinction of the two sexes and pair formation. We shall adapt and extend this approach.\\
\ \\
\noindent Let $w(t,a)$ denote the density of single female lice of age $a\geq0$ at time $t\geq 0$, that is, females which are not breeding and are available for mating. Respectively, we denote by $\wB(t,a)$ the density of breeding females, and by $m(t,a)$ the density of single male lice. We assume that the death rates are age-dependent functions. Birth rates are not relevant for the moment and shall be introduced later. Let $\mu_w(a)$ and $\mu_m(a)$ denote the death rate of female and male lice, respectively. In contrast to the model proposed in~\cite{Laguna2011}, in which it is assumed that males are readily available and that with only one fertilization
female lice are able to lay eggs until they die, we explicitly introduce the mating component and the possibility of multiple fertilization.
Let $p(t,x,y)$ be the lice pairs which, at time $t\geq 0$, are formed by females of age $x\geq 0$ and males of age $y\geq 0$. Pair formation is described by a function 
 $\phi(w,m)(t,x,y):= \tilde \rho(x,y)\pi(w(t,x),m(t,y))$, where $\tilde \rho(x,y)$ is the age-dependent mating rate and $\pi(w,m)$ describes the mating behavior. We assume that the following properties hold:
 \begin{enumerate}[(i)]
 	\item $\tilde \rho(x,y)\geq 0\,$ for all $x\geq0,\,y\geq0$,
 	\item $\pi(w,m)\geq 0\,$ and continuously differentiable in both variables. 
 \end{enumerate}
 As pair formation is not possible when only females or only males are present we require that
 \begin{enumerate}[(i)]
 	 \setcounter{enumi}{2}
 	\item $\pi(0,m)=0$, $\pi(w,0)=0$, for all $m\geq 0,\,w \geq 0$,
 	\item $\partial_w \pi(w,m)|_{(w,0)}=0 =\partial_m\pi(w,m)|_{(0,m)}$. 
 \end{enumerate}
Among the functions which satisfy the above assumptions~(i)-(iv), possible choices for $\pi(w,m)$ are given by the incidence law,
 \begin{equation*}
 	\pi(w(t,x),m(t,y))= \frac{w(t,x)m(t,y)}{\int_0^{\infty} w(t,u)\,du+\int_0^{\infty}m(t,u)\,
 		du},
 \end{equation*}
 as suggested in \citep{Li2004} for mating of mosquitoes, or by the mass action law
 \begin{equation*}
 \pi(w(t,x),m(t,y))= w(t,x)m(t,y),
 \end{equation*}
often used for modeling contacts in epidemiological models~\cite{Martcheva2015}. In this paper we shall use the latter mating function. During a mating process both female and male louse might die. In particular it has been reported that if one of the two insects dies during the mating process, the other one dies as well~\cite{Liceworld}. To capture this phenomenon we introduce the probability $\xi\in [0,1]$ that a pair does not survive the mating process. Respectively, with probability $1-\xi$ both insects survive. We assume that pairs split at some constant rate $\sigma>0$, independent of the age of the insects. That is, for a pair formed by a female of age $x$ and a male of age $y$, let the pair splitting rate be $\sigma(x,y)\equiv \sigma\geq 0$. In time, single female lice age, might die due to natural death, and can mate with males.
It is still discussed whether a female which had a fertile mating will be breeding for its whole life~\cite{Maunder1993}, as it has been observed for the pubic louse \cite{Multiple_mating_Burgess}, or it needs a new mating for a new ovoposition \cite{Multiple_mating_Mehlhorn,Boutellis}. To keep the model as general as possible, we introduce the return rate, $\theta_{\alpha}$, of breeding females to the nonbreeding compartment. This parameter is defined as the product $\theta_{\alpha}:=\alpha \theta$, 
 where $1/\alpha>0$ corresponds to the average length of the breeding period and $\theta\in [0,1]$ is the probability that after the breeding period a female lice returns to the nonbreeding compartment. From balance laws and classical approaches for age-structured populations \cite{Webb2008} we obtain the equation
\begin{equation}
\label{model1_eq:PDEfemales}
\begin{aligned}
\ddt{w}(t,a)& = - \underbrace{\dda{w}(t,a)}_{\mbox{aging}}\,  -\,\underbrace{\mu_w(a)w(t,a)}_{\mbox{death}}\\
& \phantom{=}\; - \underbrace{\int_0^{\infty}\phi(w,m)(t,a,y)\,dy}_{\mbox{pair formation}}\; +
\underbrace{\theta_{\alpha}\wB(t,a).}_{\substack{\mbox{females which are}\\ \mbox{no longer  breeding}}}
\end{aligned}
\end{equation}
Analogously, the dynamics of the male population is given by  
\begin{equation}
\label{model1_eq:PDEmales}
\begin{aligned}
\ddt{m}(t,a)& = - \underbrace{\dda{m}(t,a)}_{\mbox{aging}}\,  -\,\underbrace{\mu_m(a)m(t,a)}_{\mbox{death}}\\
& \phantom{=}\; - \underbrace{\int_0^{\infty}\phi(w,m)(t,x,a)\,dx}_{\mbox{pair formation}}\; +\underbrace{\sigma (1-\xi)\int_0^{\infty}p(t,x,a)\,dx}_{\mbox{pairs split}}.
\end{aligned}
\end{equation}			
The equation for pairs is given by
\begin{equation}
\label{model1_eq:PDEpairs}
\begin{aligned}
\ddt{p}(t,x,y)& = -\ddx{p}(t,x,y)- \ddy{p}(t,x,y)-\underbrace{(1-\xi)\sigma p(t,x,y)}_{\mbox{pairs split}} \\[0.2em]
& \phantom{=}\;-\underbrace{\xi\sigma p(t,x,y)}_{\mbox{pairs die}} \;+ \;
\underbrace{\phi(w,m)(t,x,y)}_{\mbox{pair formation}}.
\end{aligned}
\end{equation}
After pair splitting the female moves to the breeding stage which culminates with an ovoposition. The dynamics of breeding females is given by 
\begin{equation}
\label{model1_eq:PDEfemalesB}
\begin{aligned}
\ddt{\wB}(t,a)& = -\dda{\wB}(t,a) -\mu_{\wB}(a)\wB(t,a)
-\theta_{\alpha} \wB(t,a)\\
& \phantom{=}\;  + \int_0^{\infty}(1-\xi)\sigma p(t,a,y)\,dy.
\end{aligned}
\end{equation}
New individuals are born by females in the breeding stage (in contrast, in \cite{Hadeler1993} they were born by pairs). Let $b(t,a)$ be the fertility rate of a breeding female of age $a$ at time $t$. With a certain probability $r\in [0,1]$ the egg will evolve into a male, respectively, with probability $1-r$ into a female. It is biological plausible to assume that there is no breeding female of age zero, nor pair in which one of the two insects is of age zero. Hence for all $t\geq 0$ we have the boundary conditions:
\begin{equation}
\begin{aligned}
w(t,0) &=(1-r)\int_0^{\infty}b(t,a)\wB(t,a)\,da,\\
m(t,0) &=r\int_0^{\infty}b(t,a)\wB(t,a)\,da,\\
\wB(t,0)&= 0,\;\mbox{and}\,\;
p(t,0,y) = 0\; = \;p(t,x,0) \quad \mbox{for all } x,\,y \geq 0.
\label{eq:initcond_pde}
\end{aligned}
\end{equation}
Nonnegative initial age distributions complete the model~\eqref{model1_eq:PDEfemales}--\eqref{eq:initcond_pde}.

\subsection{Transmission}
\label{sec:PDEwithTransmission}
Parasite transmission is dependent on the life cycle of the louse \citep{Meister2016}, in particular adult lice can move from host to host, while eggs or early stage nymphs are not motile~\citep{Liceworld}. We define the \textit{transferring rates} $\alpha_{w}(t,a)$ and $\beta_{w}(t,a)$ of female lice moving onto the host's head, respectively away from the host's head. Analogously, let $\alpha_{m}(t,a)$ and $\beta_{m}(t,a)$ be the transferring rates for males. Following~\citep{Liceworld}, we assume that neither breeding females nor pairs move from host to host. Then the equations \eqref{model1_eq:PDEfemales} and \eqref{model1_eq:PDEmales} change into 
\begin{equation}
\label{model1_eq:PDEfemales2}
\begin{aligned}
\ddt{w}(t,a)& = - \dda{w}(t,a)  -\mu_w(a)w(t,a) - \int_0^{\infty}\phi(w,m)(t,a,y)\,dy\\
& \phantom{=}\;  + \theta_{\alpha}\wB(t,a)
-\beta_w(t,a)w(t,a)+\alpha_w(t,a),
\end{aligned}
\end{equation}
respectively,
\begin{equation}
\label{model1_eq:PDEmales2}
\begin{aligned}
\ddt{m}(t,a) &  =-\dda{m}(t,a) -\mu_m(a)m(t,a)- \int_0^{\infty}\phi(w,m)(t,x,a)\,dx\\
& \phantom{=}\;+ \sigma (1-\xi)\int_0^{\infty}p(t,x,a)\,dx-\beta_m(t,a)m(t,a)+\alpha_m(t,a).
\end{aligned}
\end{equation}
Note that in equations~\eqref{model1_eq:PDEfemales2} and \eqref{model1_eq:PDEmales2} we chose age- and time-dependent transferring rates: dependence on age is for considering different transmission rates at different life stages of the louse, whereas dependence on time allows to model situations such as quarantine or interactions with other hosts (see Sec.~\ref{sec:treat_simulax}).

\subsection{From the age structure to delay equations}
\label{sec:ddemodel}
In spite of their elegance, continuous age-structured models such as~\eqref{model1_eq:PDEfemales}--\eqref{eq:initcond_pde} are hardly comparable to biological data from lice cultures experiments~{\cite{TakanoLee2003}}. Collected data are commonly of discrete nature, quantifying number of lice in a certain age group (or life stage). To provide a qualitative description of the head lice life cycle, such that it could be compared to experimental data, one might use compartmental models formulated as systems of ordinary differential equations (ODEs) or delay differential equations (DDEs). 
In the following we apply methods from \citep{Bocharov2000,MVBKPH2014} and show how to obtain a system of DDEs from the above PDE model \eqref{model1_eq:PDEfemales}--\eqref{eq:initcond_pde}. Let us suppose that we want to make use of mathematical models to understand lice reproduction or to fine-tune specific treatments which target adult lice only (or eggs only). Then we can simplify the continuous age structure in model \eqref{model1_eq:PDEfemales}--\eqref{eq:initcond_pde} and consider two classes of insects, namely, head lice in the juvenile phase (eggs and nymphs, $a\leq \tau$) and adult lice ($a>\tau$). During the juvenile phase of length $\tau$ days, lice are either in the egg stage or in one of the nymphs stages, and do not mate nor move. The biology suggests that $\tau \in [13-16]$ days ({cf. Sec.~\ref{sec:Introduction} and Table~\ref{Table_rates}}).  Let us define for all $t\geq0$ the following model variables:
\begin{align*}
J(t) & = \int_0^{\tau}(m(t,a)+w(t,a))\,da, \quad \mbox{the total number of juveniles},\\
W(t) & =\int_{\tau}^{\infty}w(t,a)\,da,\quad \mbox{the total number of nonbreeding adult females},\\
M(t)& =\int_{\tau}^{\infty}m(t,a)\,da,\quad \mbox{the total number of adult males},\end{align*}
\begin{align*}
P(t)& =\int_{0}^{\infty}\int_{0}^{\infty}p(t,x,y)\,dx\,dy,\quad \mbox{the total number of pairs},\\
\WB(t)& =\int_{\tau}^{\infty}\wB(t,a)\,da,\quad\mbox{the total number of adult breeding females}.
\end{align*}
We characterize these populations in terms of fertility, death, motility and mating rates. We assume that single females and male lice die at the same rate, hence:
\begin{displaymath}
\mu_w(a)\equiv \mu_m(a) = \begin{cases}
\mu_0 & a\leq \tau,\\
\mu_1 & a> \tau,
\end{cases}
\quad \mbox{with}\quad \mu_1\geq \mu_0>0.
\end{displaymath}
Pair formation occurs only among adult lice, thus for the age-dependent mating rate $\tilde \rho(x,y)$ we set
\begin{displaymath}
\tilde \rho(x,y)=\begin{cases}
0 & \mbox{if}\quad x\leq \tau \quad\mbox{or}\quad y\leq \tau,\\
\rho & \mbox{if}\quad x> \tau \quad\mbox{and}\quad y>\tau.
\end{cases}
\end{displaymath}
As a result, $p(t,x,y)=0$ for $x\leq\tau$ or $y\leq\tau$. It follows that there is no breeding female of age $x\leq\tau$, that is,
\begin{equation}
\label{model1_eq:pop_t_tau_0}
\begin{aligned}
p(t,x,y) &= 0,\quad \mbox{for} \; x \leq \tau \; \mbox{or}\; y \leq \tau,\\
\wB(t,a) &= 0,\quad  a \leq \tau.
\end{aligned}
\end{equation}
For simplicity, let us assume that the fertility rate of breeding females,
\begin{displaymath}
b(t,a)=\begin{cases}
0 & \mbox{if}\quad a\leq \tau,\\
b_1\geq 0 & \mbox{if}\quad a> \tau,
\end{cases}
\end{displaymath}
and the death rate, $\mu_{\wB}(a)=\mu_B>0$, are constant values for all $t\geq 0,\, a>\tau$. It is biologically meaningful to assume that there is no "infinitely old" female, that is, $w(t,a)\to 0$ for $a\to \infty$. Similarly,
\begin{equation*}
\begin{aligned}
m(t,a) &\to 0,\quad \wB(t,a)\to 0\quad \mbox{for} \quad a\to \infty,\\
p(t,x,y)&\to 0,\quad \mbox{for}  \quad x\to \infty \quad \mbox{or} \quad y\to
\infty.
\end{aligned}
\end{equation*}
As in Sect.~\ref{sec:PDEwithTransmission}, we use transmission coefficients to observe head-lice moving from one head to another. Under the assumption that juvenile lice do not move, we set 
$$\beta_w(t,a)= \begin{cases}
0 & a\leq \tau,\\
\beta_W(t)\geq 0 & a> \tau,
\end{cases}\qquad 
\alpha_w(t,a)= \begin{cases}
0 & a\leq \tau,\\
\alpha_W(t)\geq 0 & a> \tau,
\end{cases}.
$$
Analogously, we set $\beta_m(t,a)=\beta_M(t) \geq 0$, respectively $\alpha_m(t,a)=\alpha_M(t)\geq 0$, for $a>\tau$ and zero otherwise.
Under the above assumptions, differential equations for variables $J,\,M,\,W$ and $\WB$ can be rigorously obtained (cf.~\citep{MVBKPH2014,Bocharov2000}) from the system \eqref{model1_eq:PDEfemales}--\eqref{eq:initcond_pde}. We show in the following how to obtain the equation for the juveniles.
\begin{align*}
\dot J(t)  & = \ddt{\,}\int_0^{\tau}\bigl(w(t,a)+m(t,a)\bigr)\,da\\
&  = -\int_0^{\tau}\bigl(\dda{w}(t,a)+\dda{m}(t,a)\bigr)\,da\\
& \phantom{=} -\int_0^{\tau}\bigl(\mu_w(a)w(t,a)+\mu_m(a)m(t,a)\bigr)\,da\\
& \phantom{=} -\int_0^{\tau}\beta_m(t,a)m(t,a)\,da+\int_0^{\tau}\alpha_m(t,a)\,da\\
& \phantom{=} -\int_0^{\tau}\beta_w(t,a)w(t,a)\,da+\int_0^{\tau}\alpha_w(t,a)\,da\\
& = -w(t,\tau)-m(t,\tau)+w(t,0)+m(t,0)-\mu_0J(t)\\
& \underset{\eqref{eq:initcond_pde}}{=}	-w(t,\tau)-m(t,\tau)+\int_0^{\infty}b(t,a)\wB(t,a)\,da-\mu_0J(t)\\	
& =	-w(t,\tau)-m(t,\tau)+\int_{\tau}^{\infty}b(t,a)\wB(t,a)\,da-\mu_0J(t)\\	
& =	-w(t,\tau)-m(t,\tau)+b_1\WB(t)-\mu_0J(t).	
\end{align*}
In the last expression we still find the addends $w(t,\tau),\,m(t,\tau)$, related to the PDE approach. These shall be formulated in terms of the new variables $J,\,M,\,W,\,\WB$. Applying the method of characteristics, for $t>\tau$ we find
\begin{equation*}
\begin{aligned}
w(t,\tau) &= w(t-\tau,0)e^{-\mu_0\tau}\,=\,(1-r)b_1\WB(t-\tau)e^{-\mu_0\tau},\\
m(t,\tau) &= m(t-\tau,0)e^{-\mu_0\tau}\,=\,rb_1\WB(t-\tau)e^{-\mu_0\tau}.
\end{aligned}
\end{equation*}
Hence, the equation for the juveniles is given by
\begin{equation}
\label{model1_eq:juv}
\dot J(t) = b_1\WB(t)-b_1\WB(t-\tau)e^{-\mu_0\tau}-\mu_0J(t).	
\end{equation}
Similarly, one can obtain the equations for adult females, adult males and breeding females. For the total number of pairs it is useful to recall the condition \eqref{model1_eq:pop_t_tau_0}. 

Then we have
\begin{align*}
\dot P(t) 
& = \ddt{\,}\int_{0}^{\infty}\int_{0}^{\infty}p(t,x,y)\,dx\,dy\\
& = -\int_{\tau}^{\infty}\int_{\tau}^{\infty}\ddx{p}(t,x,y)\,dx\,dy - \int_{\tau}^{\infty}\int_{\tau}^{\infty}\ddy{p}(t,x,y)\,dx\,dx\\
&\phantom{=} -\int_{\tau}^{\infty}\int_{\tau}^{\infty}\sigma p(t,x,y)\,dx\,dy+\int_{\tau}^{\infty}\int_{\tau}^{\infty}\phi(w,m)(t,x,y)\,dx\,dy\\
& = -\sigma P(t) +\rho W(t)M(t).
\end{align*}
{Mating is rather fast compared to other processes, such as death or reproduction, in the life cycle of head lice~\cite{Liceworld}. Hence, we can assume the pairs dynamics to occur on a fast time scale, hence that it holds $\epsilon \dot P= -\sigma P +\rho W M$, for $\epsilon >0$ small. Considering the limit $\epsilon \to 0$ we obtain the quasi-steady state approximation, $P=\rho WM/\sigma$, and substitute this in the equations for $M$ and $\WB$.} Thus the DDE model is reduced to a system of four equations:
\begin{equation}
\label{model1_eq:DDE_all_inout}
\begin{aligned}
\dot J(t) & =b_1\WB(t)-b_1\WB(t-\tau)e^{-\mu_0\tau}-\mu_0J(t)\\
\dot W(t)  & =(1-r)b_1\WB(t-\tau)e^{-\mu_0\tau} -(\mu_1+\rho M(t)-\beta_W(t))W(t)\\ 
& \phantom{=} \; +\theta_{\alpha}\WB(t)+\alpha_W(t)\\
\dot M(t) & = rb_1\WB(t-\tau)e^{-\mu_0\tau}-(\mu_1+\xi W(t)-\beta_M(t))M(t)+\alpha_M(t)\\
\dot\WB(t)  &= -(\theta_{\alpha}+\mu_{B})\WB(t) +(1-\xi)\rho W(t)M(t).
\end{aligned}
\end{equation}
A similar model was proposed in the master thesis of the first author \citep{Castelletti2015}. {It can be observed that whereas in the equations for W and M the delay  appears in form of a positive feedback term, in the juvenile population we find a negative feedback due to maturation $(-b_1\WB(t-\tau)e^{-\mu_0\tau})$. In the unfortunate case $J(0)=0,\, \WB(0)=0$ and $\WB(t)>0$ for $t<0$ this would lead to a negative solution for $J$. For guaranteeing nonnegativity of solutions proper initial data can be chosen, as we explain below.}\\
\ \\
A general expression for the number of juvenile lice at time $t\geq 0$ is given by 
\begin{equation}
J(t) = \int_0^t b_1 \WB(v) e^{-\mu_0(t-v)} \Pi(t-v)\,dv,
\label{eq:egg_integ}
\end{equation}
meaning that juveniles at time $t$ are those eggs deposited in the interval of time $[0,t]$, which did not die nor exited the juvenile compartment. The probability $\Pi(a)$ relates to the maturation rate and transition to the adult compartment of juveniles of age $a$. Dependent on the choice of the probability distribution one can obtain various types of differential equations~\cite{Yuan2014}. For example, if we choose the uniform distribution  
$$\Pi(a)= \begin{cases}
1, \qquad \mbox{for}\; a \in [0,\tau]\\
0, \qquad \mbox{otherwise},
\end{cases}$$
then substitution in \eqref{eq:egg_integ} yields
\begin{equation}
\label{eq:UintegralDDE}
J(t) = \int_0^{\tau} b_1 \WB(t-v) e^{-\mu_0 v}\,dv = \int_{t-\tau}^{t} b_1 \WB(w) e^{-\mu_0(t-w)}\,dw. 
\end{equation}
Differentiating~\eqref{eq:UintegralDDE} with respect to the time $t$ we obtain equation \eqref{model1_eq:juv}. Equation~\eqref{eq:UintegralDDE} could be used for defining biologically meaningful initial data for the DDE model~\eqref{model1_eq:DDE_all_inout} and guaranteeing nonnegativity of solutions.

\subsection{When treatments target specific life stages}
\label{sec:ODEmodel}
Treatments of head lice infestations are based on products (cf. Sec.~\ref{sec:Introduction}) which target specific life stages of the lice. In particulars certain treatments are effective only against nymphs and adult lice, whereas others target also eggs. To model treatments we shall use time dependent functions. Further we shall include compartments for all lice stages of interest, in particular we shall refine the juvenile class in~\eqref{eq:egg_integ} and consider eggs and nymphs separately.\\
\ \\
We go back to the integral equation~\eqref{eq:egg_integ}. Choosing for $\Pi$ an exponential distribution,
\begin{equation*}
\Pi(a)=e^{-\eta a},
\end{equation*}
and substituting in \eqref{eq:egg_integ}, we obtain
\begin{equation*}
J(t) = \int_0^{t} b_1 \WB(t-v) e^{-\mu_0 (t-v)}e^{-\eta (t-v)}\,dv.
\end{equation*}
Differentiation with respect to the time $t$ yields the ordinary differential equation
\begin{equation*}
\dot J(t) = b_1 \WB(t)-(\mu_0 +\eta)J(t).
\end{equation*}
\noindent The maturation term $\eta J(t)$ indicates that $1/\eta$ is the average duration of the juvenile stage. Assuming as above that transitions between one life stage and the next follow an exponential distribution, one can introduce a new compartment for each life stage to observe. The result is a system of ODEs with linear transitions between compartments~\citep{MacDonald1978}, namely
\begin{equation}
\begin{aligned}
\dot{U}(t) & = b_1 W^B(t) -(\mu_0 +\eta) U(t) -T_U(t) U(t)\\[0.5em]
\dot{N}(t) & = \eta U(t) -(\omega +\mu_N +T(t))N(t)\\[0.5em]
\dot{W}(t) & = (1-r)\omega N(t) -(\mu_1 +\rho M(t) +T(t)+\beta_W(t))W(t) \\
&\qquad +\theta_{\alpha} W^B(t)+\alpha_W(t)\\[0.5em]
\dot{M}(t) &= r\omega N(t)-(\mu_1 +\xi W(t)+\beta_M(t)+T(t))M(t)+\alpha_M(t)\\[0.5em]
\dot{W^B}(t) &= (1-\xi)\rho M(t)W(t) -(\mu_B +\theta_{\alpha}+T(t))W^B(t),
\end{aligned}
\label{ModelODE}
\end{equation}
with sub-populations for eggs ($U$), nymphs ($N$), single adult females ($W$), males ($M$) and breeding females ($W^B$). Maturation from eggs to nymphs, and from nymphs to adult, occurs at rate $\eta>0$ and $\omega > 0$, respectively. A schematic representation of the mathematical model \eqref{ModelODE} is presented in Fig.~\ref{Fig1_CompModel}. The terms $T_U(t)$ and $T(t)$ describe the effect of treatments against eggs and nymphs/adult lice, such as shampoos or combs. For the numerical tests shown in Sect.~\ref{sec:treat_simulax} we will choose these functions to be equal to zero in absence of treatment and nonzero at the time of the treatment. 
\begin{figure}[t]
	\centering
	\includegraphics[width=0.9\linewidth]{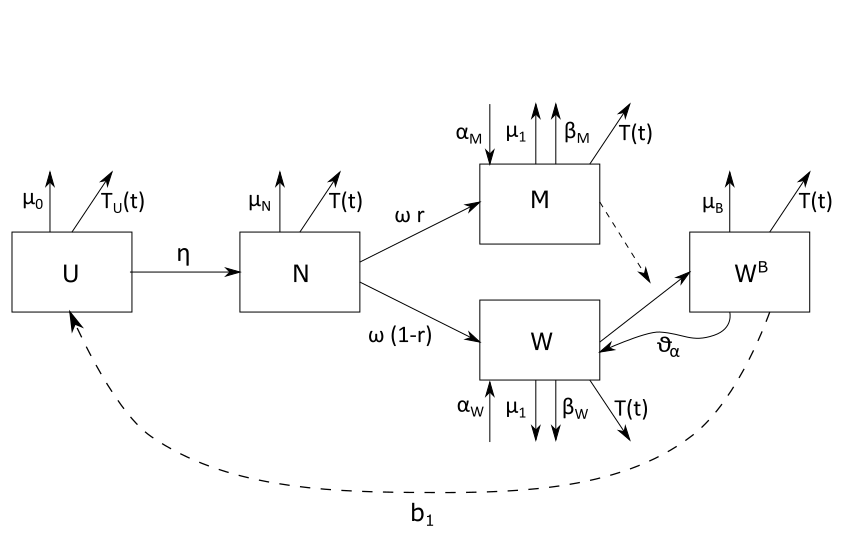}
	\caption{A schematic representation of the five compartment model \eqref{ModelODE}, with sub-populations for eggs ($U$), nymphs ($N$), females ($W$), males ($M$) and breeding females~{($W^B$).} The model includes reproduction ($b_1$), maturation from one compartment into the next one ($\eta,\, \omega$), death ($\mu_j$), migration ($\alpha_j, \beta_j$) and time-dependent treatments ($T_U(t),\,T(t)$).}
	\label{Fig1_CompModel}
\end{figure}
\subsection{Limit cases}
We conclude this section with few considerations on the limit cases,\linebreak $\theta=0,1$ and $\xi=0,1$. When $\xi=0$, no louse dies during the mating process. This assumption simplifies the equations for $M(t)$ and $\WB(t)$, as all females which mate will be able to lay eggs. In contrast, the case $\xi=1$ is not of biological interest. If all insects which mate die, the whole population will soon die out.\\
\indent Setting $\theta=0$, hence $\theta_{\alpha}=0$, one assumes that after the first mating a female will lay eggs for its whole lifetime, as it was assumed in~\cite{Laguna2011}. On the other hand, $\theta=1$, that is, $\theta_{\alpha}=\alpha$, means that females are breeding for a time $1/\alpha$, then need to mate again for a new ovoposition.

 \section{Analytical results}
 \label{sec:analysis}
In this section we present analytical results on the autonomous version of model~\eqref{ModelODE}, with $T(t)\equiv 0 \equiv T_U(t)$, for all $t\geq 0$. In the first step we consider isolated hosts, hence no lice transmission ($\alpha_j (t)\equiv 0 \equiv \beta_j(t)$, $j=W,M$), and study existence and stability of equilibria. 

\begin{theorem}
\label{exist_uniq_posit}
Consider system~\eqref{ModelODE} with $\alpha_j (t)\equiv 0 \equiv \beta_j(t)$, $j=W,M$, $T(t)\equiv 0 \equiv T_U(t)$ for all $t\geq 0$, and let nonnegative initial values be given. Then the autonomous system admits a unique nonnegative solution.
 \end{theorem}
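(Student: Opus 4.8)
\section*{Proof plan for Theorem~\ref{exist_uniq_posit}}

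The plan is to read \eqref{ModelODE} (with $T\equiv T_U\equiv\alpha_j\equiv\beta_j\equiv 0$) as an autonomous ODE with a polynomial right-hand side, get local existence and uniqueness for free, then prove invariance of the positive cone, and finally rule out finite-time blow-up by an a priori bound on the total population. First I would observe that the vector field $f\colon\mathbb{R}^5\to\mathbb{R}^5$ defined by the five right-hand sides is a polynomial map, in fact at most quadratic (the only nonlinear terms are the mating terms $\rho MW$, $\xi WM$ and $(1-\xi)\rho MW$). Hence $f\in C^1(\mathbb{R}^5)$, so it is locally Lipschitz, and the Picard--Lindel\"of theorem gives, for every nonnegative initial datum $(U_0,N_0,W_0,M_0,W^B_0)$, a unique solution on a maximal interval $[0,t_{\max})$ with $t_{\max}\in(0,\infty]$.

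Next I would show that the closed positive orthant $\mathbb{R}^5_{\geq 0}$ is forward invariant, using the standard subtangential (quasipositivity) criterion: $\mathbb{R}^5_{\geq 0}$ is positively invariant for $\dot x=f(x)$ as soon as $f_i(x)\geq 0$ on each coordinate face $\{x_i=0\}\cap\mathbb{R}^5_{\geq 0}$. Checking the five faces directly: on $\{U=0\}$ one has $\dot U=b_1W^B\geq 0$; on $\{N=0\}$, $\dot N=\eta U\geq 0$; on $\{W=0\}$, $\dot W=(1-r)\omega N+\theta_{\alpha}W^B\geq 0$ because $r\in[0,1]$; on $\{M=0\}$, $\dot M=r\omega N\geq 0$; and on $\{W^B=0\}$, $\dot W^B=(1-\xi)\rho MW\geq 0$ because $\xi\in[0,1]$. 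All remaining coefficients $b_1,\eta,\omega,\mu_0,\mu_N,\mu_1,\mu_B,\rho,\theta_{\alpha}$ are nonnegative by assumption, so the criterion applies and the solution stays nonnegative on $[0,t_{\max})$. As an alternative that yields existence, uniqueness and nonnegativity at once, I could instead run a Picard iteration in $C([0,T],\mathbb{R}^5_{\geq 0})$: each equation has the form $\dot x_i=p_i(t)-q_i(t)x_i$ with $p_i,q_i\geq 0$ whenever the other components are nonnegative, and the variation-of-constants formula $x_i(t)=x_i(0)e^{-\int_0^tq_i}+\int_0^tp_i(s)e^{-\int_s^tq_i}\,ds$ shows the iteration maps the nonnegative cone into itself and contracts on short intervals.

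Finally, to upgrade to a global solution ($t_{\max}=\infty$), I would sum the five equations. Writing $S:=U+N+W+M+W^B$, the maturation and return terms telescope away, the mating cross-terms contribute only the nonpositive quantity $-\xi(1+\rho)MW$, and using nonnegativity one is left with $\dot S\leq b_1W^B\leq b_1S$ on $[0,t_{\max})$. Gr\"onwall's inequality then gives $S(t)\leq S(0)e^{b_1t}$, so the (nonnegative) solution remains in a compact subset of $\mathbb{R}^5_{\geq 0}$ on every bounded time interval; the standard blow-up alternative for ODEs then forces $t_{\max}=\infty$, which finishes the proof.

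I do not expect a genuine obstacle here: the statement is a routine well-posedness result. The only step requiring a little care is the boundary check in the positivity argument, namely verifying that no negative-feedback loss term survives on its own coordinate face. This is precisely where the ODE model \eqref{ModelODE} behaves better than the DDE model \eqref{model1_eq:DDE_all_inout}: the maturation loss $-(\mu_0+\eta)U$ vanishes on $\{U=0\}$ and only the nonnegative source $b_1W^B$ remains, whereas in the delay system the analogous term $-b_1\WB(t-\tau)e^{-\mu_0\tau}$ is evaluated at a shifted time and need not vanish there, which is the origin of the nonnegativity caveat noted after \eqref{model1_eq:DDE_all_inout}. Choosing the initial data consistently, e.g.\ via the representation \eqref{eq:UintegralDDE}, removes that issue in the DDE case, but for the present ODE system no such restriction is needed.
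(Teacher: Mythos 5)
Your proposal is correct and follows essentially the same route as the paper's proof: Picard--Lindel\"of for local existence and uniqueness, followed by the quasipositivity check of the right-hand side on each coordinate face to obtain forward invariance of the nonnegative cone. You go slightly further than the paper by also ruling out finite-time blow-up via the Gr\"onwall bound $S(t)\leq S(0)e^{b_1 t}$ on the total population $S=U+N+W+M+W^B$ --- a step the paper's proof omits but which is needed if the solution is to be understood globally in time.
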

\begin{proof}
Existence and uniqueness of the solution are guaranteed by the theorem of Picard-Lindel\"of and the proof is simple given the smoothness of the right-hand side. For the nonnegativity of solutions we consider the right-hand side at the boundaries of the positive cone. Let us assume that for some $\bar t>0$ the solution $U(\bar t)=0$, while all others components are nonnegative, then we have $\dot U(\bar t) = b_1 \WB(\bar t)\geq 0$. Hence the component $U$ does not drop below zero. Similarly, one can show nonnegativity for all other model components.
\end{proof}
 
\begin{theorem}
\label{prop:LFE}
Consider system~\eqref{ModelODE} with $\alpha_j (t)\equiv 0 \equiv \beta_j(t)$, $j=W,M$, and $T(t)\equiv 0 \equiv T_U(t)$ for all $t\geq 0$. There is only one lice-free-equilibrium (LFE) $P_0=(0,0,0,0,0)$, and there is no other equilibrium of the system in which at least one component is equal to zero.  The lice-free-equilibrium is locally asymptotically stable. 
  \end{theorem}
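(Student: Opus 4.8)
The plan is to treat the three assertions one after another, each by elementary means. \emph{Boundary equilibria.} For the first two assertions I would set the right-hand sides of the autonomous version of~\eqref{ModelODE} equal to zero and exploit that the linear exit rates $\mu_0+\eta$, $\omega+\mu_N$, $\mu_1$ and $\mu_B+\theta_\alpha$ are all strictly positive. The driving observation is that the vanishing of any single component at an equilibrium propagates to all of them. If $W^B=0$, then $\dot U=0$ forces $U=0$, then $\dot N=0$ forces $N=0$, then $\dot M=0$ forces $M=0$ (since $\mu_1+\xi W\ge\mu_1>0$), and finally $\dot W=0$ reduces to $0=\mu_1 W$, so $W=0$. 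Each of the remaining four single-component hypotheses routes back to this one: $U=0$ gives $b_1 W^B=0$, hence $W^B=0$ when $b_1>0$; $N=0$ gives $U=0$ from $\dot N=0$; and $M=0$ or $W=0$ gives $W^B=0$ from the breeding-female balance $(1-\xi)\rho MW=(\mu_B+\theta_\alpha)W^B$. Hence every equilibrium with at least one zero coordinate equals $P_0$, which is exactly the content of the first two claims. (When $b_1=0$, or when $\xi=1$, the same implications show $P_0$ is in fact the only equilibrium at all, so the statement holds a fortiori.)

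\emph{Local asymptotic stability.} I would linearise~\eqref{ModelODE} at $P_0$. Every nonlinear term — $\rho MW$ in $\dot W$, $\xi WM$ in $\dot M$, $(1-\xi)\rho MW$ in $\dot{W^B}$ — is quadratic, so it and its first-order partial derivatives vanish at the origin. Reordering the unknowns as $(W^B,U,N,W,M)$, the Jacobian $J(P_0)$ becomes lower triangular with diagonal entries $-(\mu_B+\theta_\alpha),\,-(\mu_0+\eta),\,-(\omega+\mu_N),\,-\mu_1,\,-\mu_1$; equivalently,
\[
\det\bigl(\lambda I-J(P_0)\bigr)=(\lambda+\mu_B+\theta_\alpha)(\lambda+\mu_0+\eta)(\lambda+\omega+\mu_N)(\lambda+\mu_1)^2 .
\]
All five eigenvalues are real and strictly negative, so by the principle of linearised stability $P_0$ is locally asymptotically stable.

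\emph{Expected obstacle.} I do not anticipate a genuine difficulty here — the result is essentially bookkeeping. The two points requiring care are (i) making the case distinction in the cascade argument exhaustive and routing each case back to $W^B=0$, and (ii) noting that the degenerate parameter values $b_1=0$ and $\xi=1$ do not spoil the conclusion. The triangular structure of $J(P_0)$ makes the spectral step immediate, so no Routh--Hurwitz analysis is needed for this equilibrium, in contrast with the positive equilibria studied afterwards.
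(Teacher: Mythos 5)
Your proposal is correct and follows essentially the same route as the paper: the paper likewise disposes of the boundary equilibria by the recursive ``one zero component forces all others to vanish'' argument (which it only sketches, and which you spell out completely), and establishes stability by linearising at $P_0$ and noting that all five eigenvalues of $J(P_0)$ are negative. Your reordering of the variables to $(W^B,U,N,W,M)$, which exposes the triangular structure and gives the eigenvalues by inspection, is simply an explicit version of the paper's ``it can be quickly shown that $\lambda_j<0$.''
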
 	
 \begin{proof}
 We omit the trivial computation to show that the LFE is unique and that there is no further equilibrium with one or more components equal to zero. Shortly, if any component of the equilibrium is zero, then recursively all other components turn out to be zero.\\
 \ \\
 For the proof of local asymptotic stability we linearize about $P_0$ and obtain the linear system $\dot Z(t) = J(P_0)Z(t)$, with the Jacobian matrix
\begin{displaymath}
J(P_0)=
 \begin{pmatrix}
 	-\mu_0-\eta & 0 & 0 & 0 & b_1\\
 	\eta & -\omega -\mu_N & 0 & 0 & 0\\
 	0 & (1-r)\omega & -\mu_1 & 0 & \theta_{\alpha}\\
 	0 & r\omega & 0 & -\mu_1 & 0\\
 	0 & 0 & 0 & 0 & -\mu_B -\theta_{\alpha}
 \end{pmatrix}.
\end{displaymath} 
 Local stability of $P_0$ is determined by the real parts of the roots $\lambda$ of the characteristic polynomial, $\det(J(P_0-\lambda I))=0$. 
 As by assumption all parameters in~\eqref{ModelODE} are nonnegative, and in particular all death rates are strictly positive, it can be quickly shown that $\lambda_j<0,\, j=1,\ldots,5$. Thus, the LFE is a locally asymptotically stable node.
 \end{proof}

Assume from here on that $\xi>0, \rho>0$. For the proof of existence and uniqueness of a nontrivial equilibrium it is convenient to define the nonnegative constants
\begin{equation*}
\begin{aligned}
\RW & := \frac{(1-\xi) r \omega \eta b_1 \rho }{\xi (\mu_0+\eta)(\omega+\mu_N)(\mu_{B}+\theta_{\alpha})},\\[0.8em]
\RM & := \frac{(1-\xi)\left[(1-r)\omega \eta b_1 +\theta_{\alpha}(\mu_0 +\eta)(\omega+\mu_N)\right]}{(\mu_0 +\eta)(\omega+\mu_N)(\mu_B +\theta_{\alpha})}.
\end{aligned}
\end{equation*}
These can be interpreted as "basic reproduction numbers" of adult male ($\RM$) and female ($\RW$) lice, respectively.
\begin{theorem}
\label{exi_uni_non_triv_equi}
Consider system~\eqref{ModelODE} with $\alpha_j (t)\equiv 0 \equiv \beta_j(t)$, $j=W,M$, and $T(t)\equiv 0 \equiv T_U(t)$ for all $t\geq 0$. If $\RM>1$ and $\RW>1$ then there is a unique positive equilibrium point of system~\eqref{ModelODE},  
$P_1=(U_1^*,N_1^*,W_1^*,M_1^*,W_1^{B*}),$ which is {unstable}. The coordinates of $P_1$ are given by
\begin{equation}
\begin{aligned}
W_1^* &=\frac{\mu_1}{\xi(\RW-1)}, \qquad
M_1^* = \frac{\mu_1}{\rho(\RM-1)}, \\ 
W^{B*}_1 &= \frac{\mu_1^2 (1-\xi)}{(\RM-1)(\RW-1)(\mu_{B}+\theta_{\alpha})},\\
U_1^*  &=\frac{b_1}{\mu_0+\eta}W^{B*}_1,\qquad
N_1^*  = \frac{\eta}{\omega+\mu_N}U^{*}_1.
\end{aligned}
\label{non_trivial_equil}
\end{equation}
\end{theorem}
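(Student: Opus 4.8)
The plan is to obtain $P_1$ by solving the stationary equations one coordinate at a time, and then to deduce instability from the sign of $\det J(P_1)$. Concretely, setting the right-hand side of \eqref{ModelODE} (with $\alpha_j,\beta_j,T,T_U$ identically zero) equal to zero, the $\dot U=0$ and $\dot N=0$ equations give the linear relations $U=\frac{b_1}{\mu_0+\eta}W^B$ and $N=\frac{\eta}{\omega+\mu_N}U$, while $\dot{W^B}=0$ gives $W^B=\frac{(1-\xi)\rho MW}{\mu_B+\theta_{\alpha}}$. At a positive equilibrium $M>0$ and $W>0$, so I substitute these expressions into $\dot M=0$ and cancel the factor $M$; using the definition of $\RW$, the equation collapses to $\xi\RW\,W=\mu_1+\xi W$, hence $W_1^*=\mu_1/(\xi(\RW-1))$, which is positive by the hypothesis $\RW>1$. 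Likewise, substituting into $\dot W=0$, cancelling $W$, and using the definition of $\RM$, the equation becomes $\RM\,\rho M=\mu_1+\rho M$, hence $M_1^*=\mu_1/(\rho(\RM-1))>0$ by the hypothesis $\RM>1$. Back-substitution then delivers $W_1^{B*}=\frac{(1-\xi)\rho M_1^*W_1^*}{\mu_B+\theta_{\alpha}}$, $U_1^*=\frac{b_1}{\mu_0+\eta}W_1^{B*}$ and $N_1^*=\frac{\eta}{\omega+\mu_N}U_1^*$, all strictly positive, which gives the coordinates in \eqref{non_trivial_equil} and establishes existence.

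Uniqueness follows because the derivation above is a chain of forced choices: every positive equilibrium has all five coordinates strictly positive, so the cancellations of $M$ and $W$ are legitimate and each coordinate is uniquely determined by the preceding ones. Hence $P_1$ is the only positive equilibrium, and combined with Theorem~\ref{prop:LFE} the system has exactly the two equilibria $P_0$ and $P_1$.

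For instability I linearise \eqref{ModelODE} about $P_1$; the Jacobian $J(P_1)$ has a first column with only the two nonzero entries $-(\mu_0+\eta)$ and $\eta$, so expanding $\det J(P_1)$ along it reduces the computation to $(\mu_0+\eta)(\omega+\mu_N)$ times one $3\times3$ determinant plus $\eta b_1$ times another, both assembled from the rows for $W$, $M$ and $W^B$. Evaluating these with the equilibrium identities $\mu_1+\rho M_1^*=\RM\,\rho M_1^*$ and $\mu_1+\xi W_1^*=\xi\RW\,W_1^*$ together with the algebraic form of $\RM$ and $\RW$, the terms carrying $\theta_{\alpha}$ cancel and I expect to arrive at
$$\det J(P_1)=\mu_1^{2}\,(\mu_0+\eta)(\omega+\mu_N)(\mu_B+\theta_{\alpha})>0 .$$
Since $\det J(P_1)=\prod_{j=1}^{5}\lambda_j$ for the eigenvalues $\lambda_1,\dots,\lambda_5$ of $J(P_1)$, and non-real eigenvalues occur in conjugate pairs with positive product, a spectrum with $\operatorname{Re}\lambda_j\le 0$ for all $j$ would leave an odd number of strictly negative real eigenvalues and hence force $\prod_j\lambda_j<0$, contradicting the value just obtained; therefore at least one eigenvalue has strictly positive real part and $P_1$ is unstable. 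The one step that needs genuine care is this $5\times5$ determinant evaluation — carrying out the cofactor expansion and recognising the cancellations induced by the equilibrium relations and by the shape of $\RM$ and $\RW$; the remaining steps (the successive solution of the stationary system, the positivity and uniqueness bookkeeping, and the eigenvalue-sign argument) are routine.
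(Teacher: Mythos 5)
Your computation of the equilibrium is exactly the paper's: solve $\dot U=\dot N=\dot{W^B}=0$ for $U,N,W^B$ in terms of $M,W$, then cancel $M$ (resp.\ $W$) in the remaining two equations to get $W_1^*$ and $M_1^*$, with positivity forced by $\RW>1$, $\RM>1$ and uniqueness by the forced chain of substitutions. For instability, however, you take a genuinely different and more economical route. The paper computes the full characteristic polynomial $f(\lambda)=-f_a(\lambda)+f_b(\lambda)$, analyses the quintic $f_a$ (all roots in the left half-plane) and the line $f_b$, and concludes from $f_b(0)>f_a(0)$ that the two graphs must cross at some $\lambda>0$, i.e.\ $f$ has a positive real root. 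You bypass the polynomial entirely and only evaluate $\det J(P_1)$; note that your target value is precisely the paper's $f(0)=f_b(0)-f_a(0)=\mu_1^2(\mu_0+\eta)(\omega+\mu_N)(\mu_B+\theta_\alpha)$, and I have checked that the cofactor expansion you outline does produce it (the $M^*$-terms collect into $\rho M^*(\RM-1)\kappa_2\kappa_3\kappa_6=\mu_1\kappa_2\kappa_3\kappa_6$, the $W^*$-terms into $W^*\xi(\RW-1)\kappa_2\kappa_3\kappa_6=\mu_1\kappa_2\kappa_3\kappa_6$, and one of these cancels against $-\mu_1\kappa_2\kappa_3\kappa_6$). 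Your parity argument — in odd dimension a real matrix with positive determinant cannot have all eigenvalues in the closed left half-plane — is valid; it could even be stated slightly more strongly via the intermediate value theorem ($p(\lambda)=\det(\lambda I-J)$ satisfies $p(0)=-\det J<0$ and $p(+\infty)=+\infty$, so there is a positive \emph{real} eigenvalue, which is what the paper also obtains). What your approach buys is that none of $\phi_0,\phi_1,\chi_0,\chi_1$ or the factorisation of the quadratic factor is needed; what it loses is the explicit characteristic polynomial, which the paper's presentation makes available for any further spectral analysis. The only loose end is your phrase ``I expect to arrive at'': the determinant evaluation is the one nontrivial computation and must actually be carried out, but it does close as you predict.
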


\begin{proof}
The equilibrium conditions are obtained by setting the right-hand side of system~\eqref{ModelODE} equal to zero,
\begin{align}
0 =&\, b_1 {\WB}^* -(\mu_0 +\eta) U^* \label{eq:equiU1}\\
0 =&\, \eta U^* -(\omega +\mu_N)N^* \label{eq:equiN1}\\
0 =&\, (1-r)\omega N^* -(\mu_1 +\rho M^*)W^* +\theta_{\alpha}{\WB}^* \label{eq:equiW1}\\
0=& \,r\omega N^*-(\mu_1 +\xi W^*)M^* \label{eq:equiM1}\\
0=& \,(1-\xi)\rho M^*W^* -(\mu_B +\theta_{\alpha}){\WB}^*\label{eq:equiWB1}.
\end{align}
From the first and second equation~\eqref{eq:equiU1},\eqref{eq:equiN1} we calculate, respectively, $U^*$ and $N^*$ as (linear) functions of ${\WB}^{*}$,
\begin{equation}
U^*=\frac{b_1}{\mu_0 +\eta} {\WB}^{*}, \qquad 
N^*=\frac{\eta}{\omega+\mu_N} U^* = \frac{\eta}{\omega+\mu_N}\frac{b_1}{\mu_0 +\eta} {\WB}^{*}.
\label{eq:equiUN2}
\end{equation}
With the last relation we obtain an expression for $W^{B*}$ as a function of $M^*$ and $W^*$,
\begin{equation}
{\WB}^{*}= \frac{(1-\xi)\rho}{\mu_B +\theta_{\alpha}} M^* W^*. \label{eq:equiWB2}
\end{equation}
Now we substitute~\eqref{eq:equiUN2}, \eqref{eq:equiWB2} into the equation~\eqref{eq:equiW1} and find a linear equation in $M^*$ which (assuming $W^*\neq 0$) yields 
\begin{equation*}
M^* = \frac{\mu_1}{\frac{(1-\xi)(1-r)\omega \eta b_1 \rho}{(\mu_0 +\eta)(\omega+\mu_N)(\mu_B +\theta_{\alpha})} +\theta_{\alpha}\frac{(1-\xi)\rho}{\mu_B +\theta_{\alpha}} -\rho} = \frac{\mu_1}{\rho(\RM-1)}.
\end{equation*}
This value is nonnegative if $\RM>1$. Similarly, assuming  $M^*\neq 0$, from \eqref{eq:equiM1} we find 
   \begin{equation*}
W^*= {\frac{\mu_1}{\frac{(1-\xi) r\omega \eta b_1 \rho}{(\mu_0 +\eta)(\omega+\mu_N)(\mu_B +\theta_{\alpha})} -\xi}} = \frac{\mu_1}{\xi(\RW-1)},
\end{equation*}
hence $W^*>0$ if $\RW>1$.
If either $W^*$ or $M^*$ are zero, we are in the case of Theorem~\ref{prop:LFE} and we find the lice-free equilibrium $P_0$.\\
\ \\ 	
For the proof of linearized stability we introduce a slightly more compact notation and define the constants
\begin{equation}
\begin{aligned}
&\kappa_1= 1-\xi \in[0,1], \;\; \kappa_2=\theta_{\alpha}+\mu_B>0,  \;\; \kappa_3 = \mu_0 +\eta>0,  \\
&\kappa_4= (1-r)\omega>0,\;\; \kappa_5 = r\omega>0,  \;\; \kappa_6 = \omega +\mu_N>0,  \; \;
\kappa_7=\eta b_1>0. 
\end{aligned}
\label{kappas}
\end{equation}
We consider the linearized system about $P_1$, governed by the Jacobian matrix 
\begin{displaymath}
J(P_1)=
\begin{pmatrix}
-\kappa_3 & 0 & 0 & 0 & b_1\\
\eta & -\kappa_6 & 0 & 0 & 0\\
0 & \kappa_4 & -\mu_1 -\rho M^* & -\rho W^* & \theta_{\alpha}\\
0 & \kappa_5 & -\xi M^* & -\mu_1 -\xi W^* & 0\\
0 & 0 & \kappa_1\rho M^* & \kappa_1\rho W^* & -\kappa_2
\end{pmatrix}.
\end{displaymath} 
Long computation leads to the characteristic polynomial of $P_1$,
\begin{equation}
\begin{aligned}
f(\lambda)& = -f_a(\lambda)+f_b(\lambda)\\
& =
-\underbrace{(\kappa_3+\lambda)(\kappa_6+\lambda)(\mu_1+\lambda)(\lambda^2+\lambda \phi_1+\phi_0)}_{f_a(\lambda)} +\underbrace{\chi_0 +\chi_1 \lambda}_{f_b(\lambda)},
\label{poly1}
\end{aligned}
\end{equation}
where
\begin{align*}
	\phi_1 & := {\kappa_2}+\mu_1+\xi W^* +\rho M^*>0\\
	\phi_0 & := (\theta_{\alpha}\xi+\mu_{B})\rho M^* +\kappa_2(\mu_1+\xi W^*) >0 \\
	\chi_1 & = \kappa_1 \kappa_7 \rho (\kappa_4 M^* +\kappa_5  W^* )>0,\\
	\chi_0 & = \mu_1 \chi_1>0.
\end{align*}
The local asymptotic stability of $P_1$ is determined by the zeros of $f(\lambda)$ in \eqref{poly1}, or equivalently by the intersections of the fifth order curve $f_a(\lambda)$ with the line $f_b(\lambda)=\chi_0+\chi_1 \lambda$. The latter has positive slope and positive intercept with the y-axis. The quadratic factor in $f_a(\lambda)$ can be written as the product  $(\lambda-A_1)(\lambda-A_2)$, with
$A_{1,2}= (-\phi_1 \pm\sqrt{\phi_1^2-4 \phi_0})/2$. Observe that $A_{1,2}$ are either both real and negative (because $\phi_0,\phi_1>0$), or complex conjugated with negative real part. Hence $f_a(\lambda)$ is a fifth order polynomial with zeros laying all on the left half of the complex plane. Further, it has positive intercept with the y-axis, $f_a(0)=\kappa_3\kappa_6\mu_1 \phi_0>0$. Moreover, it holds that $f_b(0)>f_a(0)>0$. Indeed
\begin{align*}
\lefteqn{f_b(0)-f_a(0)}\\
& = \mu_1 \left( \kappa_1 \kappa_7 \rho (\kappa_4 M^* +\kappa_5 W^*) -\kappa_3 \kappa_6 \phi_0\right) \\
&= \mu_1 \left(\rho M^*\underbrace{\left(\rho \kappa_1\kappa_4 \kappa_7  -\kappa_3 \kappa_6 \rho(\kappa_2 -\theta_{\alpha}\kappa_1)\right)}_{=(\RM-1)\kappa_3 \kappa_6\kappa_2}\right. \\
& \phantom{=} \qquad \qquad \left. + W^*\underbrace{(\rho\kappa_1 \kappa_5\kappa_7  -\xi \kappa_2 \kappa_3 \kappa_6)}_{=\xi(\RW-1)\kappa_2\kappa_3 \kappa_6} -\mu_1\kappa_2\kappa_3 \kappa_6 \right) \\
&= \mu_1^2 \kappa_2 \kappa_3 \kappa_6 >0.
\end{align*}
It follows that the characteristic polynomial \eqref{poly1} has at least one root with positive real part, hence the coexistence equilibrium $P_1$ is unstable.
\end{proof}

\noindent Let us assume the infected host is not quarantined, that is, $\alpha_j(t),\,\beta_j(t)$, $j=M,W$ are not identically zero for $t\geq 0$. For simplicity of calculation,  we consider the special case of constant transferring rates, $ \alpha_j(t)= \hat \alpha_j>0$ and $\beta_j(t)=\hat \beta_j>0,\,j=M,W$.	

\begin{theorem}
Let $\alpha_j(t)=\hat \alpha_j>0$ and $\beta_j(t)=\hat \beta_j>0$, $t\geq 0$, $j=M,W$ and  $T(t)\equiv 0 \equiv T_U(t)$ for all $t\geq 0$ in system~\eqref{ModelODE}. Assume $\RM>1$, $\RW>1$ and let $M_1^*$ be the male component of $P_1$, as indicated in \eqref{non_trivial_equil}. Let further
{$\phi_b<0$,} $0<\Delta_{\phi}:=\phi_b^2-4\phi_a\phi_c$, and ${\RWA}>1$, where
\begin{equation}
\begin{aligned}
&\phi_a := \xi(\RW-1)\left(M_1^*+\frac{\beta_W}{\rho(\RM-1)}\right), \\
& \phi_b :=\alpha_M- \frac{\alpha_{W}\xi(\RW-1)+(\mu_1+\beta_{M})(\beta_{W}+M_1^*\rho(\RM-1))}{\rho(\RM-1)}, \\
& \phi_c := \alpha_W\frac{\mu_1+\beta_M}{\rho(\RM-1)},\qquad
{\RWA}:= \frac{\beta_{W}}{\alpha_{W}} \frac{\sqrt{\Delta_{\phi}}-\phi_b}{2\phi_a}.
\end{aligned}
\label{conditions2}
\end{equation}
Then system~\eqref{ModelODE} has two positive equilibria, $P_j=(U_j^{*},N_j^{*},W_j^{*},M_j^{*},W_j^{B*})$, $j=2,3$, where $W_{2,3}^* =(-\phi_b\pm\sqrt{\Delta_{\phi}})/2\phi_a$ and
\begin{align*}
M_j^* &= M_1^*+\frac{\beta_W}{\rho(\RM-1)}-\frac{\alpha_W}{\rho(\RM-1)W_j^*},\quad j=2,3\\ 
W^{B*}_j &= \frac{\rho(1-\xi)}{\theta_{\alpha}+\mu_B}M_j^*W_j^*,\quad
U_j^*  =\frac{b_1}{\mu_0+\eta}W^{B*}_j,\quad
N_j^*  = \frac{\eta}{\omega+\mu_N}U^{*}_j,\quad j=2,3. \nonumber
\end{align*}
\end{theorem}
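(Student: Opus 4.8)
The plan is to find the equilibria by the same elimination strategy used in the proof of Theorem~\ref{exi_uni_non_triv_equi}, but keeping the migration terms. Setting the right-hand side of~\eqref{ModelODE} equal to zero (with $T\equiv T_U\equiv 0$), the first two equations still give $U^*=\tfrac{b_1}{\mu_0+\eta}{\WB}^*$ and $N^*=\tfrac{\eta}{\omega+\mu_N}U^*$ exactly as in~\eqref{eq:equiUN2}, since migration does not touch the $U$ and $N$ compartments. Likewise the $\WB$-equation is unchanged and yields ${\WB}^*=\tfrac{(1-\xi)\rho}{\theta_\alpha+\mu_B}M^*W^*$, which already gives the stated formulas for $W^{B*}_j,U^*_j,N^*_j$ once $W^*_j,M^*_j$ are known. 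So everything reduces to solving the two-dimensional system coming from the $W$- and $M$-equations:
\begin{equation*}
(1-r)\omega N^* -(\mu_1+\rho M^*+\beta_W)W^*+\theta_\alpha{\WB}^*+\alpha_W=0,\qquad
r\omega N^*-(\mu_1+\xi W^*+\beta_M)M^*+\alpha_M=0,
\end{equation*}
into which I substitute $N^*$ and ${\WB}^*$ as above, so that both become polynomial relations in $W^*$ and $M^*$ alone.

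The key observation is that, after substitution, the first of these equations becomes \emph{linear} in $M^*$ (the $\rho M^*W^*$ and $\theta_\alpha{\WB}^*\propto M^*W^*$ terms combine, and by definition of $\RM$ the coefficient of $M^*$ is $-\rho(\RM-1)W^*$), so one can solve for $M^*$ as a rational function of $W^*$:
\begin{equation*}
M^* = M_1^* + \frac{\beta_W}{\rho(\RM-1)} - \frac{\alpha_W}{\rho(\RM-1)\,W^*},
\end{equation*}
which is exactly the claimed formula for $M_j^*$; here $M_1^*=\mu_1/(\rho(\RM-1))$ is recovered when $\alpha_W=\beta_W=0$, consistent with Theorem~\ref{exi_uni_non_triv_equi}. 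Next I substitute this expression for $M^*$ into the second (the $M$-) equation. After clearing the denominator $W^*$, the $r\omega N^*$ term and the $\xi W^*M^*$ term produce, via the definition of $\RW$, a factor $\xi(\RW-1)$, and the whole thing collapses to a quadratic in $W^*$,
\begin{equation*}
\phi_a (W^*)^2 + \phi_b\,W^* + \phi_c = 0,
\end{equation*}
with $\phi_a,\phi_b,\phi_c$ as defined in~\eqref{conditions2}; this is where the bulk of the algebra lives, and checking that the coefficients regroup precisely into those three expressions is the main bookkeeping obstacle. The two roots are $W^*_{2,3}=(-\phi_b\pm\sqrt{\Delta_\phi})/(2\phi_a)$, which are real and distinct exactly when $\Delta_\phi=\phi_b^2-4\phi_a\phi_c>0$.

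It remains to argue that both roots give genuinely \emph{positive} equilibria, i.e.\ $W^*_j>0$ and $M^*_j>0$ (the remaining components are then automatically positive, being positive multiples of $M^*_jW^*_j$). Since $\RW>1$ and $\RM>1$ we have $\phi_a>0$ and $\phi_c>0$, so the product of the roots $\phi_c/\phi_a$ is positive and the sum $-\phi_b/\phi_a$ is positive precisely because $\phi_b<0$ is assumed; hence both $W^*_2,W^*_3>0$. For positivity of $M^*_j$ one needs $W^*_j > \alpha_W/\big(\beta_W+M_1^*\rho(\RM-1)\big)$ for each root, and this is exactly what the extra hypothesis $\RWA>1$ encodes: writing out $\RWA=\tfrac{\beta_W}{\alpha_W}\tfrac{\sqrt{\Delta_\phi}-\phi_b}{2\phi_a}$ and comparing with the threshold shows $\RWA>1$ forces the smaller root $W^*_2=(-\phi_b-\sqrt{\Delta_\phi})/(2\phi_a)$ (wait—$(-\phi_b+\sqrt{\Delta_\phi})/(2\phi_a)$ is the larger one, and one checks the labelling so that the relevant root clears the bound) above that threshold, and the other root, being larger, clears it a fortiori. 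I would close by noting that these are the only equilibria with $W^*,M^*\neq 0$: if $W^*=0$ then the $M$-equation forces $\alpha_M=-\mu_1M^*-\beta_MM^*$, impossible for positive data unless $M^*<0$, and symmetrically for $M^*=0$; thus no boundary equilibria of mixed type exist and $P_2,P_3$ are exactly the positive equilibria. The main obstacle throughout is purely computational — verifying that the elimination really produces the quadratic with the stated coefficients $\phi_a,\phi_b,\phi_c$ and that $\RWA>1$ is the sharp condition for $M^*_j>0$; no conceptual difficulty beyond careful algebra is expected.
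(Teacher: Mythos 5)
Your proposal follows the paper's own route step for step: the $U$-, $N$- and $W^B$-equations are unaffected by migration and give the same linear relations as in Theorem~\ref{exi_uni_non_triv_equi}; the $W$-equation is linear in $M^*$ (with coefficient $-\rho(\RM-1)W^*$) and yields $M^*$ as the stated rational function of $W^*$; substituting into the $M$-equation produces the quadratic $\phi_a W^{*2}+\phi_b W^*+\phi_c=0$, whose roots are real, distinct and positive because $\phi_a,\phi_c>0$ (from $\RM>1$, $\RW>1$), $\phi_b<0$ and $\Delta_{\phi}>0$; and the exclusion of boundary equilibria via the migration terms is also exactly the paper's opening step. The one place where your reasoning goes astray is the role of $\RWA>1$. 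Since $\phi_a>0$ and $-\phi_b>0$, the quantity $(\sqrt{\Delta_{\phi}}-\phi_b)/(2\phi_a)$ appearing in $\RWA$ is the \emph{larger} of the two roots, so $\RWA>1$ gives $W^*>\alpha_W/\beta_W>\alpha_W/(\beta_W+\mu_1)$, hence $M^*>0$, only for that larger root; your ``a fortiori'' step therefore runs in the wrong direction and does not establish $M^*>0$ at the smaller root. To be fair, the paper's own proof is no more careful here --- it merely asserts that $\RWA>1$ is sufficient for $M^*>0$ without distinguishing the two roots --- so this is a gap you share with the original rather than one you introduced; everything else in your argument matches the published proof.
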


\begin{proof}
Also in this proof we use the compact notation~\eqref{kappas}. We first show that when the transferring rates are nonzero, the LFE is not an equilibrium. As in the proof of Theorem 3 we denote by $X^*$ the steady state of the variable $X$. Let us assume $N^{*}\geq 0$. The steady state conditions yield

$$ M^{*}=0 \Leftrightarrow \underbrace{\kappa_5 N^{*}}_{\geq 0}= \underbrace{-\hat \alpha_M}_{<0},$$
which is a contradiction. On the other hand, if $W_B^*=0=N^*$ then from the $W$-equation we find $\hat \alpha_W=0$ which contradicts the assumption on the positive transferring rates. Hence, $M^{*}>0, W^{*}>0$, implying all other components are also nonzero.\\
\ \\
Now we compute the nontrivial equilibrium, analogously to $P_1$ in Theorem~\ref{exi_uni_non_triv_equi}. The relations~\eqref{eq:equiU1},  \eqref{eq:equiN1} and \eqref{eq:equiWB1} hold true also in the case of a non-quarantined host. We consider the algebraic equation given by $\dot W=0$. Assuming $W^* \neq 0$, we obtain an expression for $M^*$ as a function of $W^*$,
\begin{align}
M^* &= \underbrace{\frac{(\mu_1 +\beta_W)W^* -\alpha_W}{\rho W^*\left(\frac{\kappa_1\kappa_4\kappa_7}{\kappa_2\kappa_3\kappa_6} +\frac{\kappa_1}{\kappa_2}\theta_{\alpha} -1\right)}}_{=\rho (\RM-1)W^*} \nonumber \\
\phantom{M^* } &=M_1^* + \frac{\beta_W}{\rho(\RM -1)} -\frac{\alpha_W}{\rho(\RM -1)W^*}. \label{M_P23}
\end{align}
Observe that $M_1^*$ is nonnegative if $\RM>1$. Hence $\RWA>1$ provides a sufficient condition for $M^*>0$. From the algebraic equation $\dot M=0$ we calculate
\begin{equation*}
\underbrace{M^*}_{equ.~\eqref{M_P23}} \left( \underbrace{\left(\frac{\kappa_1\kappa_5\kappa_7}{\kappa_2\kappa_3\kappa_6}\rho -\xi \right)}_{=\xi(\RW-1)} W^* -(\mu_1 +\beta_B) \right) +\alpha_M =0, 
\end{equation*}
or equivalently, a quadratic expression in $W^*$,
$$\phi_a W^{*2} + \phi_b W^* + \phi_c =0,$$
with $\phi_a$, $\phi_b$ and $\phi_c$ defined as in \eqref{conditions2}. The assumption $\RM>1$ and $\RW >1$ implies $\phi_a >0$ and $\phi_c>0$, hence the parabola opens up and has positive intercept with the y-axis. If $\phi_b >0$, then the vertex of the parabola lies on the left half of the plane and the zeros $W^*_{2,3}$ are not of biological interest. In contrast, if $\phi_b <0$ the vertex of the parabola lies on the right half plane. The condition $\Delta_{\phi}>0$ guarantees the existence of two positive real roots $W^*_{2,3}$.
\end{proof}

\section{How to treat infestations: Four Possible Strategies}
\label{sec:treat_simulax}
Untreated infestations lead to large lice colonies and possibly to secondary bacterial infections~\cite{Cummings2018}. Fig.~\ref{Fig:No_treatment} shows the evolution in time of a lice colony which develops from a small group of adults if untreated for about 6 weeks. For the numerical simulations in Fig.~\ref{Fig:No_treatment} and for all other figures in this section we use, if otherwise not explicitly mentioned, parameter values as indicated in Table~\ref{Table_rates} and the initial conditions $U(0)=0$, $N(0)=0$, $W(0)=4$, $M(0)=4$, $W^B(0)=0$. Such initial values mirror the fact that an initial infestation usually involves less than 10 live lice~\cite{Cummings2018} and it is due to adult lice, which are able to move from host to host (cf. Sec.~\ref{sec:Introduction}).\\
\ \\
In the following we present and compare four different strategies for treatment of head lice infestation, aiming to fast and effective lice eradication. We shall denote a treatment as \textit{effective}, if the infected host is "lice-free" for two weeks (14 days) after the last treatment application. Similar to Laguna and Risau-Gusman \cite{Laguna2011}, we introduce a critical detection threshold and define a host "lice-free" when the egg/live lice population has dropped below this threshold. In all plots in Figs.\ref{Fig:Strategy1}-\ref{Fig:Strategy4} and Fig.~\ref{Fig7}, the solid black curve represents eggs ($U$), whereas the dashed black curve represents live lice (that is, the sum of nymphs and adults). Red and blue dotted lines indicate the detection threshold (assumed here to correspond to 2 eggs/live lice) and applications of treatments, respectively. We assume that lice are discovered about three weeks after initial infestation and that treatments start immediately after detection (day 21).  We start considering lice on an isolated host, who cannot be reinfected while or after being treated. That is, for all $t>0$ we set $\alpha_j (t)\equiv 0 \equiv \beta_j(t)$, $j=W,M$.\\
\ \\
\noindent \textbf{Strategy nr. 1: Classical topical treatments, non-ovocidal.} Shampoos and lotions based on insecticides such as	malathion, pyrethis and its derivatives (e.g. permethrin) kill mature nymphs and adult lice but are mostly non-ovocidal \citep{Speare2007}. It is recommended to apply two-three treatments with shampoos one week apart, the third one being necessary in severe cases~\citep{Speare2007}.
Hence, in our first attempt we shall simulate three applications of a topical non-ovocidal treatment at days 21, 28 and 35. Assuming that the treatment is perfectly working and eliminates no eggs but 100\% of live lice, then the strategy is effective, that is, three treatments are sufficient to get rid of the infestation (Fig.~\ref{Fig:shampoo100}). The timing of application of insecticide-based shampoo relies on the biology of the lice life-cycle. Being non-ovocidal, shampoos do not harm eggs, which will hatch and evolve into new adults, if the gap between the applications is too short. See for example in Fig.~\ref{Fig:shampoo100short} simulations for a shampoo applied three days in a row following detection. Analogously, when treatments are repeated once a month, the adult lice population has time to fully regenerate and the infestation persists after three treatments (Fig.~\ref{Fig:shampoo100toolong}). If the treatment is perfectly working against nymphs and adult lice, then gaps between the applications can be extended to two weeks, and three applications allow to eradicate the infestation (Fig.~\ref{Fig:shampoo100long}).\\
\ \\
Extensive use of topical treatments has led to selection and development of resistant head lice populations \citep{Yoon2003,Feldmeier2012} so that no shampoo nor lotion truly kills 100\% of live lice.  In an experimental study, Yoon et al.~\citep{Yoon2003} showed that 13-87\% lice were resistant to permethrin. Simulations in Fig.~\ref{Fig:shampoo60res} indicate that in case of 40\% resistant lice, the recommended "three times in two weeks" strategy is not sufficient to eradicate the infestation. Feldmeier et al.~\citep{Feldmeier2012} suggested to treat resistant lice with dimeticones (see Strategy 4).\\
\ \\

\begin{figure}[t]
	\centering
		\begin{subfigure}{0.47\textwidth}
		\includegraphics[width=\textwidth]{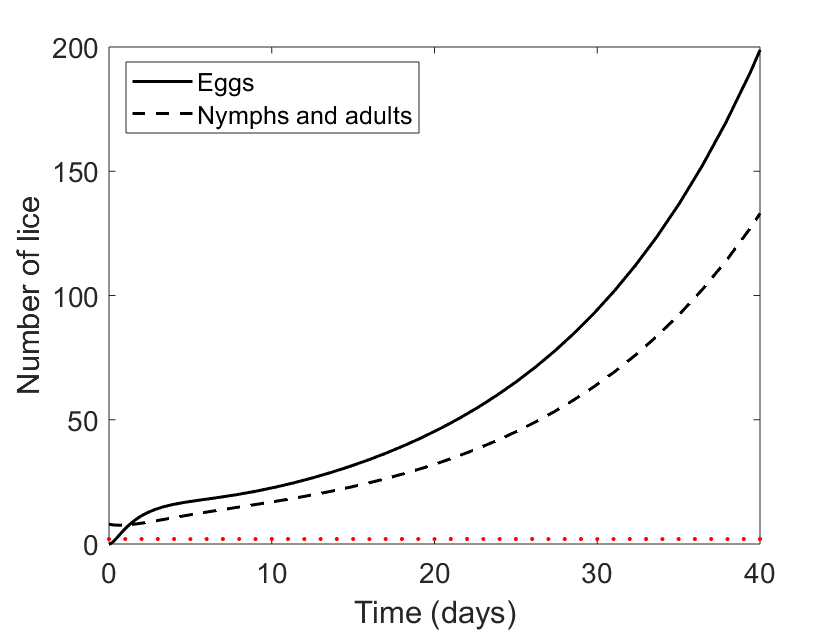}
		\caption{}
		\label{Fig:No_treatment}
	\end{subfigure}
	\begin{subfigure}{0.47\textwidth}
		\includegraphics[width=\textwidth]{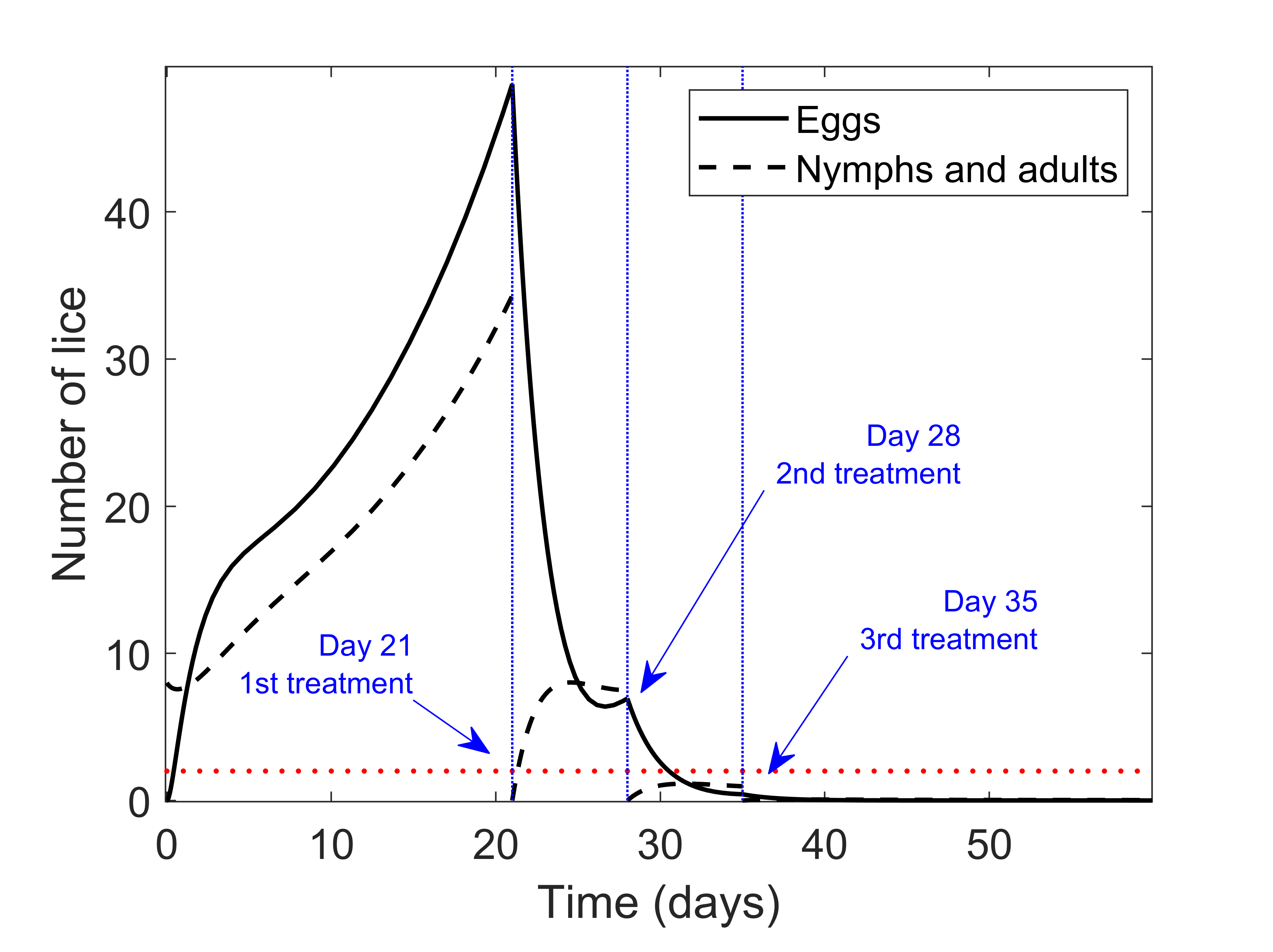}
		\caption{}
		\label{Fig:shampoo100}
	\end{subfigure}
	\begin{subfigure}{0.47\textwidth}
		\includegraphics[width=\textwidth]{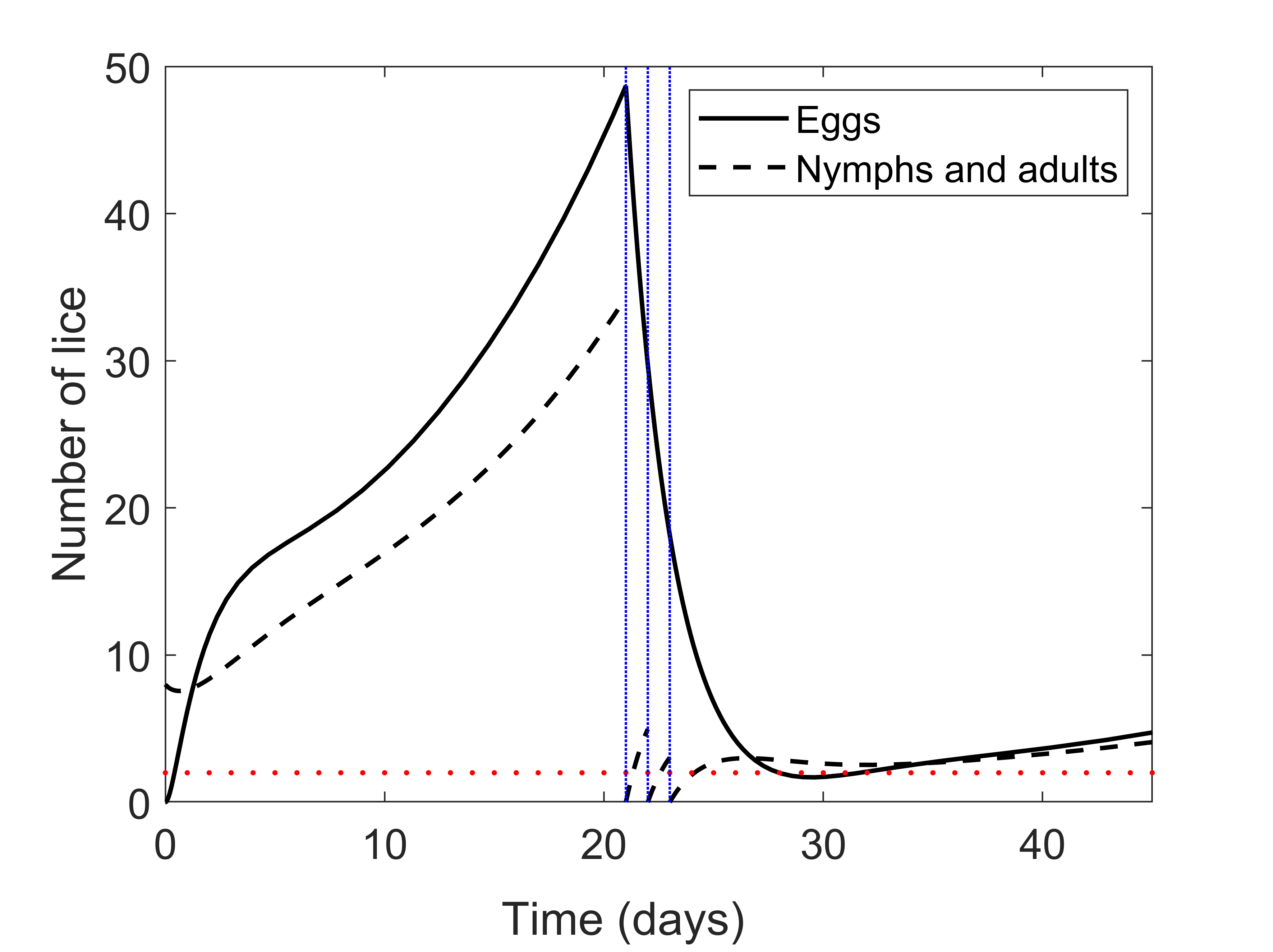}
		\caption{}
		\label{Fig:shampoo100short}
	\end{subfigure}	
	\begin{subfigure}{0.47\textwidth}
	\includegraphics[width=\textwidth]{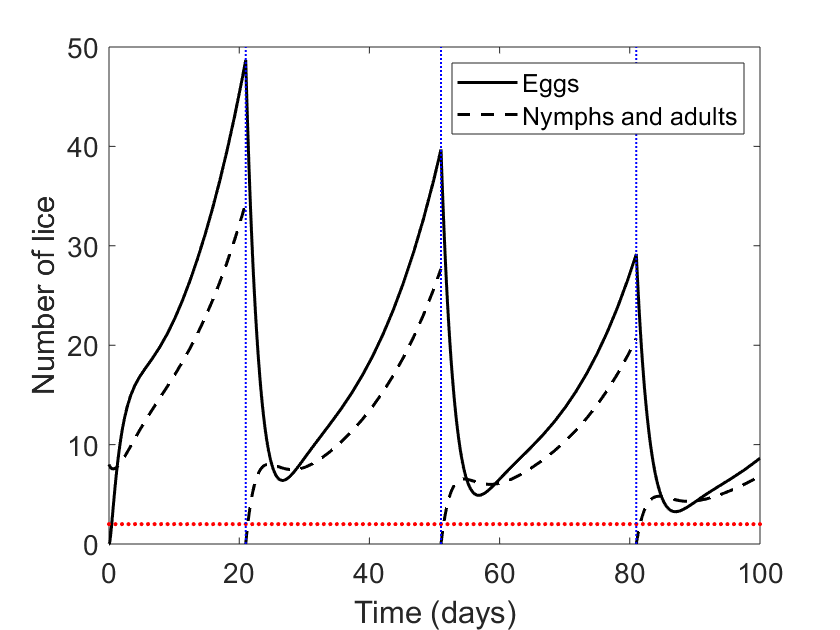}
	\caption{}
	\label{Fig:shampoo100toolong}			
\end{subfigure}
	\begin{subfigure}{0.47\textwidth}
		\includegraphics[width=\textwidth]{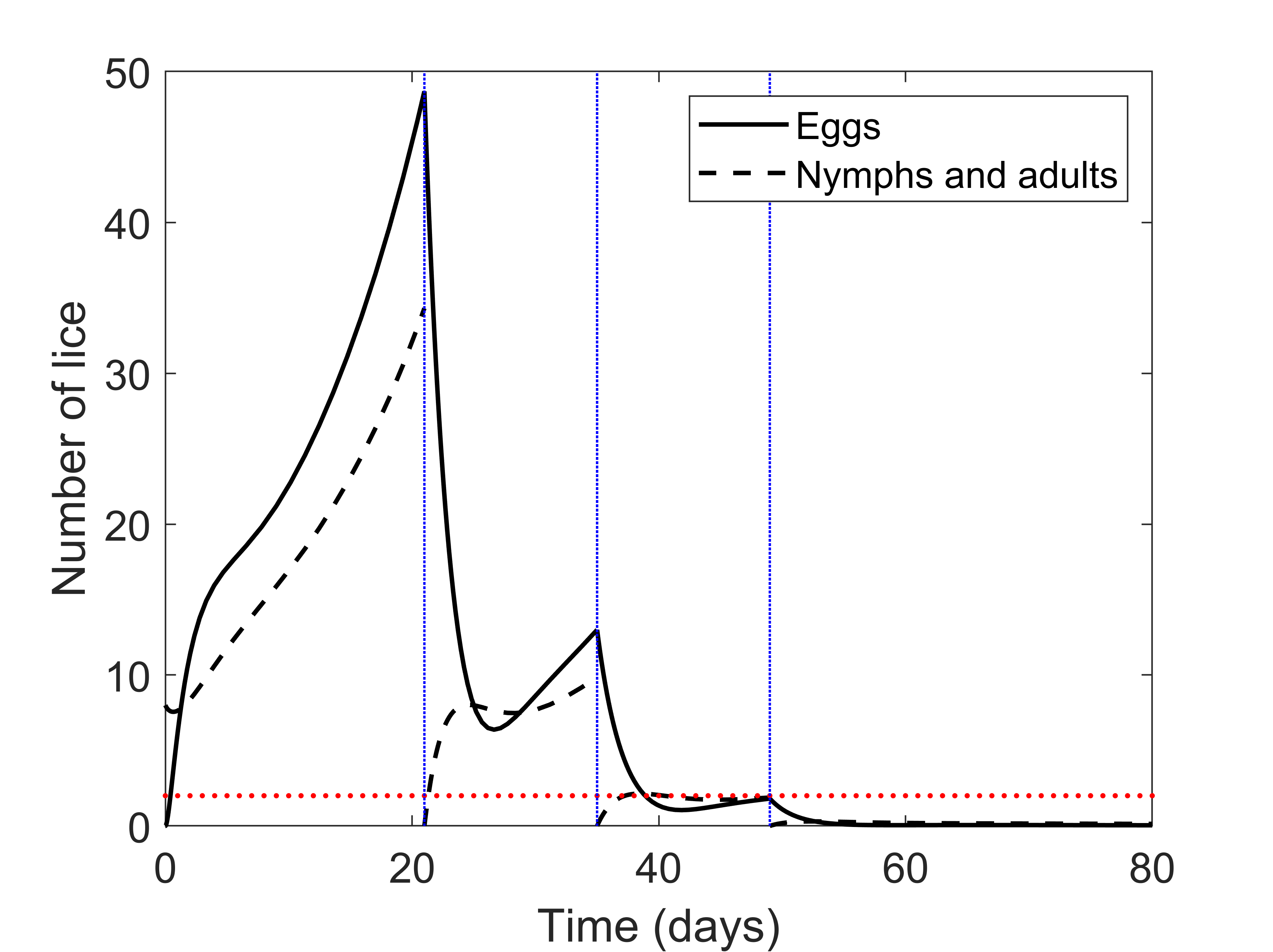}
		\caption{}
		\label{Fig:shampoo100long}			
	\end{subfigure}
	\begin{subfigure}{0.47\textwidth}
		\includegraphics[width=\textwidth]{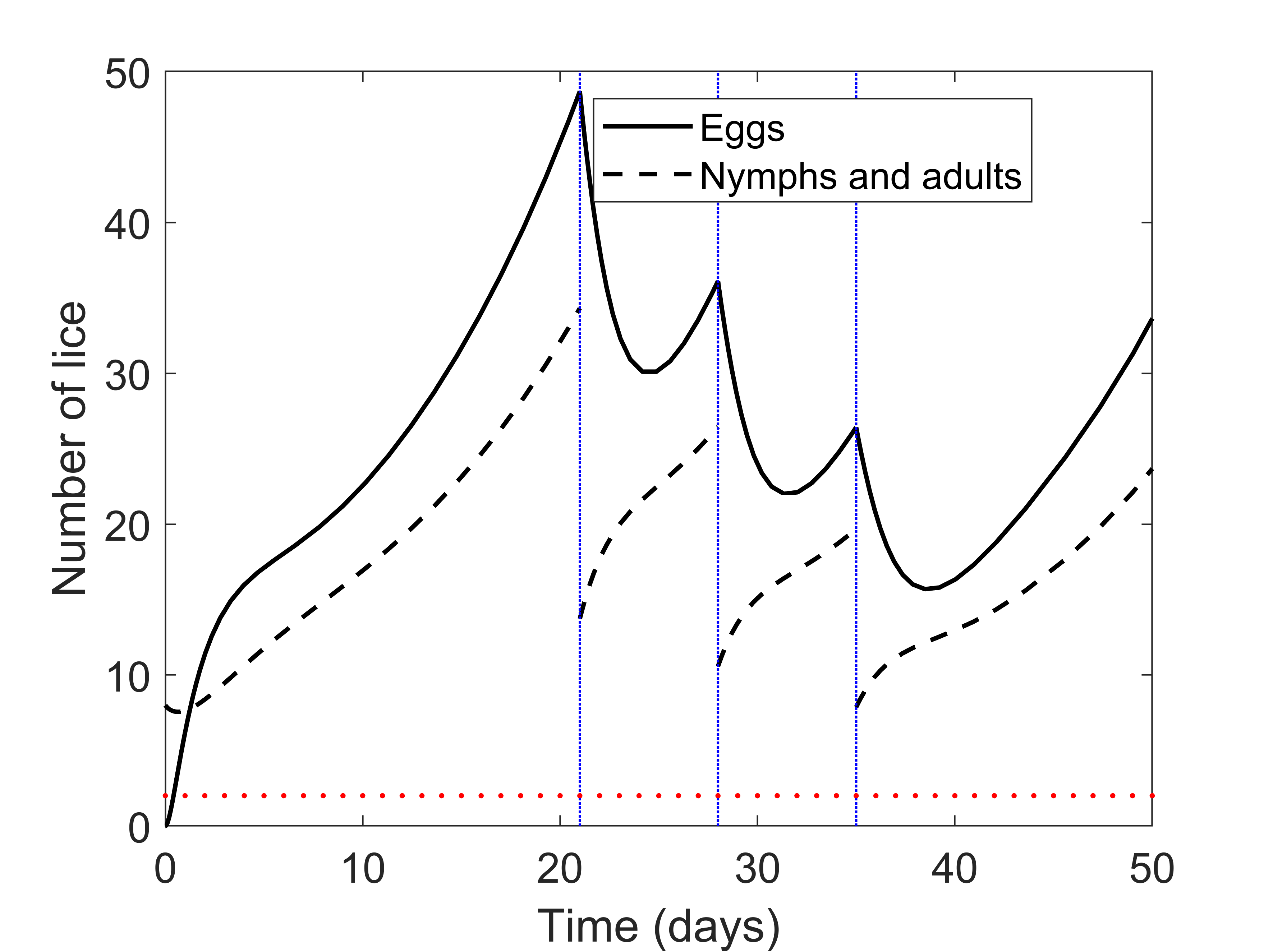}
		\caption{}
		\label{Fig:shampoo60res}
	\end{subfigure}
\caption{Evolution in time of a lice colony which develops from a small group of adults. (a) Lice colony untreated for 40 days. (b-f) Strategy nr.1. Starting from day 21 (first treatment) since the beginning of the infection, two further applications with a fully working topical treatment (killing 100\% of nymphs/adults) are repeated (b) after 7 days (day 28 and 35), (c) after 1 day  (day 22 and 23), (d) after 30 days (day 51 and 81), and (e) after 14 days from each previous treatment (day 35 and 49). (f) Topical treatments applied as in (b) assuming 40\% resistance in nymphs and adult lice.}
	\label{Fig:Strategy1}
\end{figure}

\noindent \textbf{Strategy nr. 2: Conditioner and Combing method.} Less expensive than topical treatments and non-chemical, wet combing is an optimal method for detection of head lice infestations~\cite{Feldmeier2012}. Health care institutions recommend that the hair is divided into small sections and each section is combed completely, repeating the combing procedure every one-two days until no lice are detected for 10 consecutive days \citep{WADepHealth}. It is difficult to assess and quantify the efficacy of wet combing from previous scientific studies~\cite{Feldmeier2014} as this depends on a number of factors, including the nature of the comb~\cite{Gallardo2013,Speare2002}. Therefore, for numerical simulations we consider here three scenarios: (i) low effectiveness (combing removes 20\% of live lice/eggs), (ii) moderate effectiveness (combing removes 50\% of live lice/eggs), and (iii) high effectiveness (combing removes 80\% of live lice/eggs). As recommended in~\citep{WADepHealth} we apply combing every second day until no lice/eggs are detected (meaning that both populations dropped below the detection threshold) and observe how the lice population evolves in the following two weeks. Simulations in Fig.~\ref{Fig:Strategy2} show that the duration of the treatment importantly depends on the effectiveness of the combing procedure, varying from 25 applications (Fig.~\ref{Fig:Strategy2}a) when combs remove only 20\% lice/eggs, to 2 applications (Fig.~\ref{Fig:Strategy2}e) when combs remove 80\% lice/eggs. The duration of the treatment can be reduced, in particular when the treatment eliminates only 20\% of live lice/eggs, by combing the hair every day instead of every second day (Fig.~\ref{Fig:Strategy2}(b,d,f)). Notice that interrupting the treatment here is not necessarily implying that the treatment strategy was effective. Indeed, in all cases considered in Fig.~\ref{Fig:Strategy2} the host is lice free for a few days, but the lice population grows above the detection threshold within 14 days from the last treatment, with borderline values in some cases (Fig.~\ref{Fig:Strategy2}d).\\
\ \\
	
\begin{figure}[h!]
	\centering
		\begin{subfigure}{0.47\textwidth}
		\includegraphics[width=\textwidth]{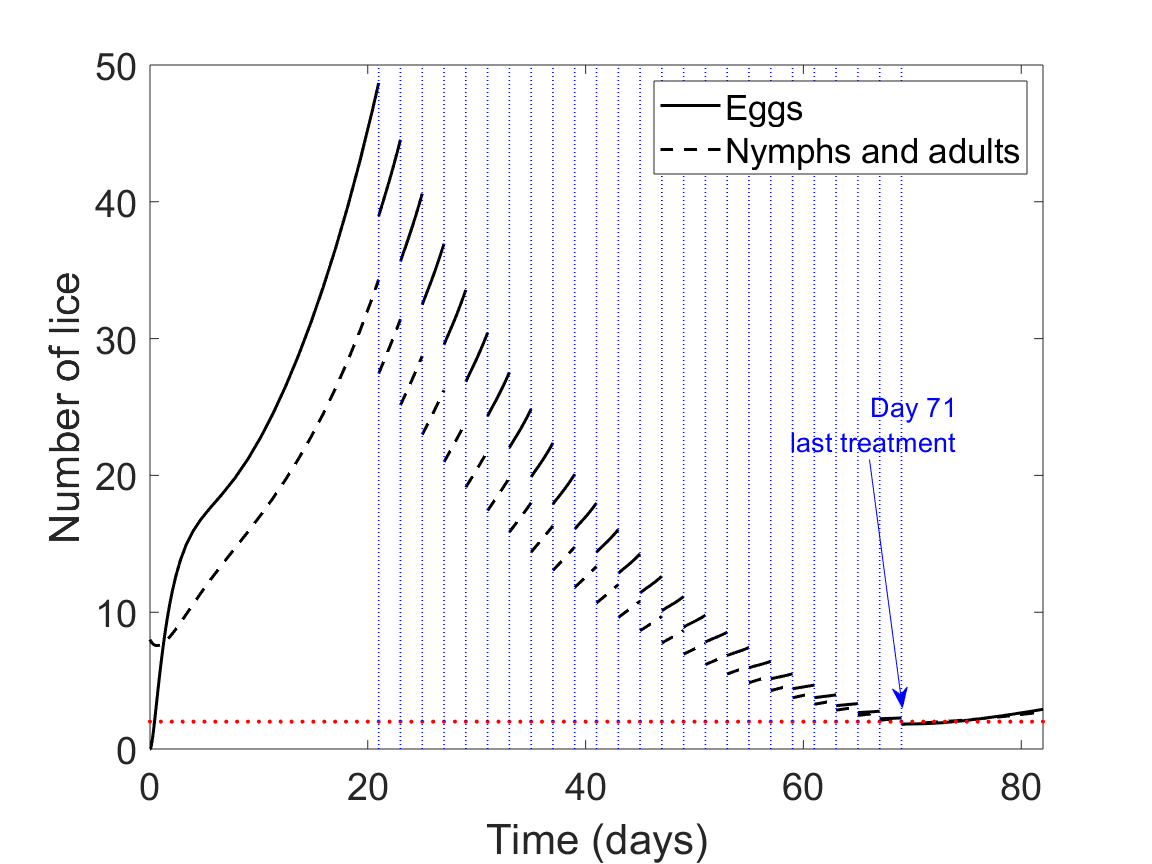}
		\caption{}
		\label{Fig:str2a}			
	\end{subfigure}
	\begin{subfigure}{0.47\textwidth}
		\includegraphics[width=\textwidth]{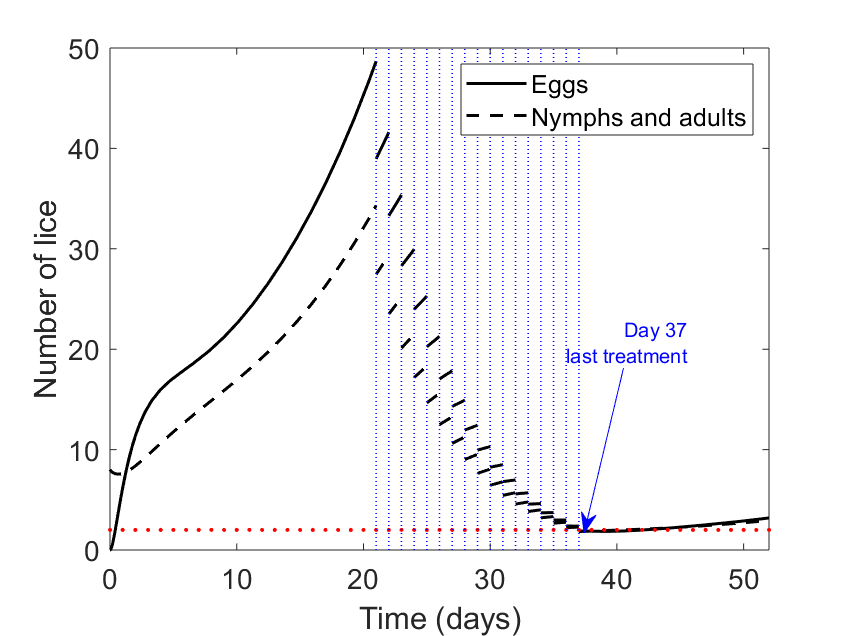}
		\caption{}
		\label{Fig:str2b}
	\end{subfigure}
	\begin{subfigure}{0.47\textwidth}
	\includegraphics[width=\textwidth]{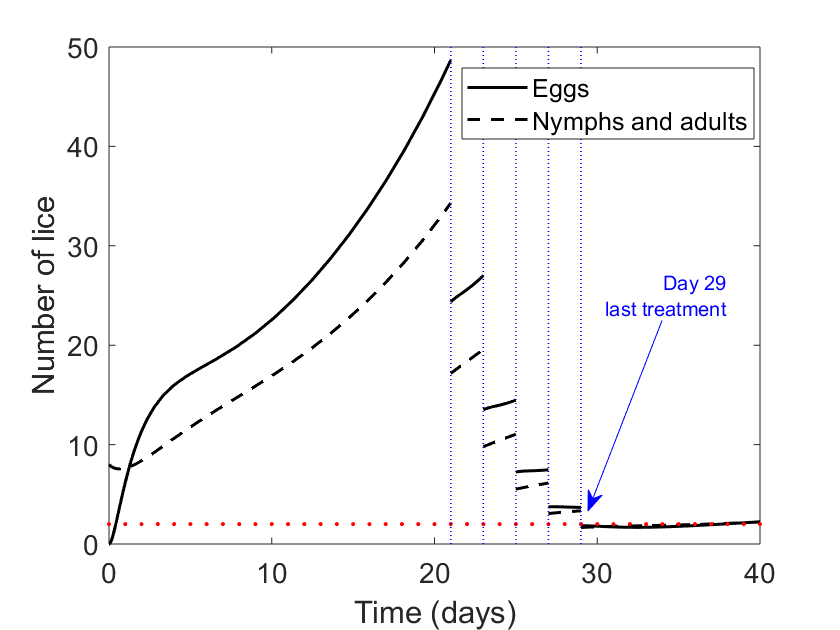}
	\caption{}
	\label{Fig:str2c}			
\end{subfigure}
\begin{subfigure}{0.47\textwidth}
	\includegraphics[width=\textwidth]{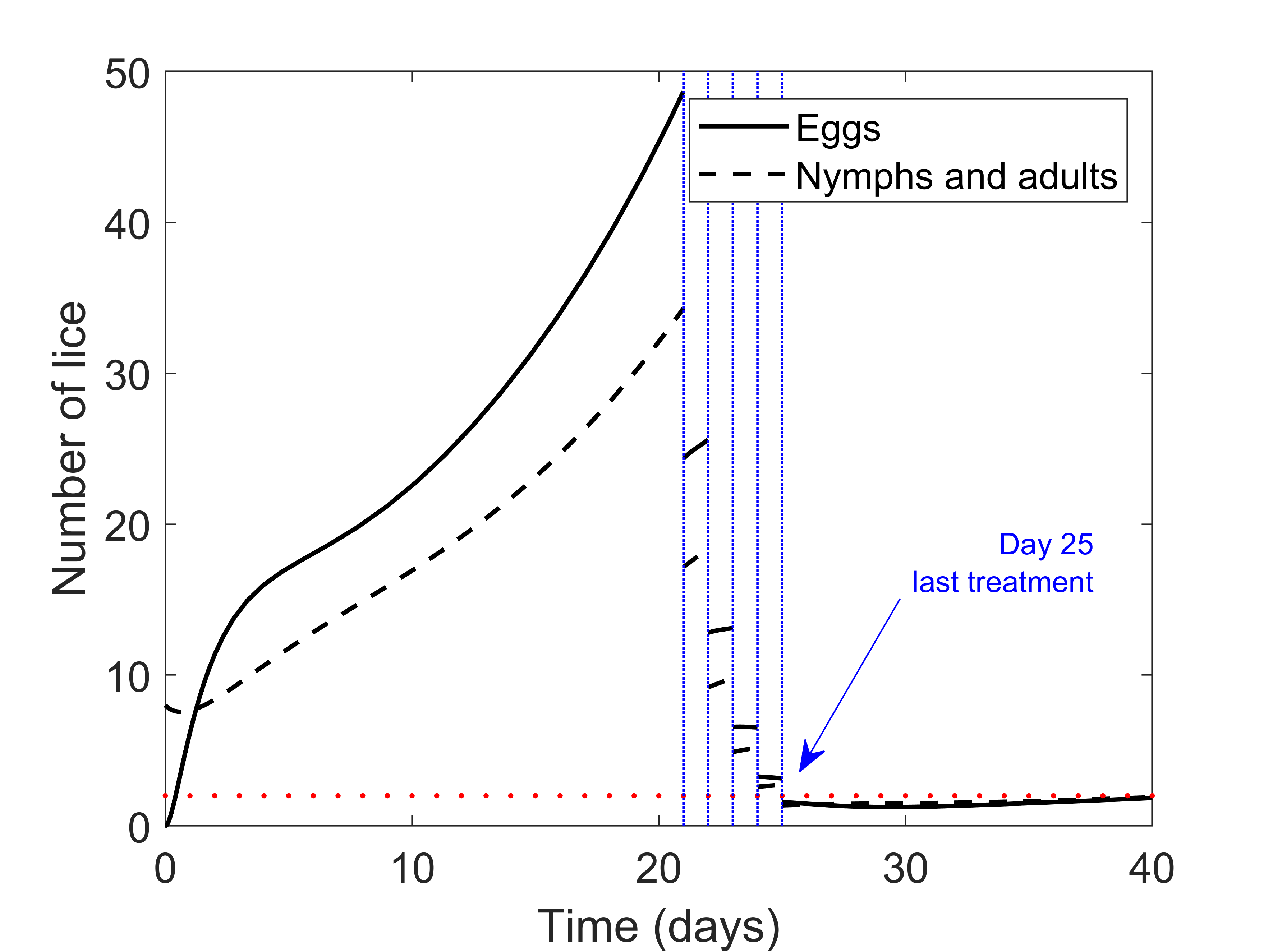}
	\caption{}
	\label{Fig:str2d}
\end{subfigure}
	\begin{subfigure}{0.47\textwidth}
	\includegraphics[width=\textwidth]{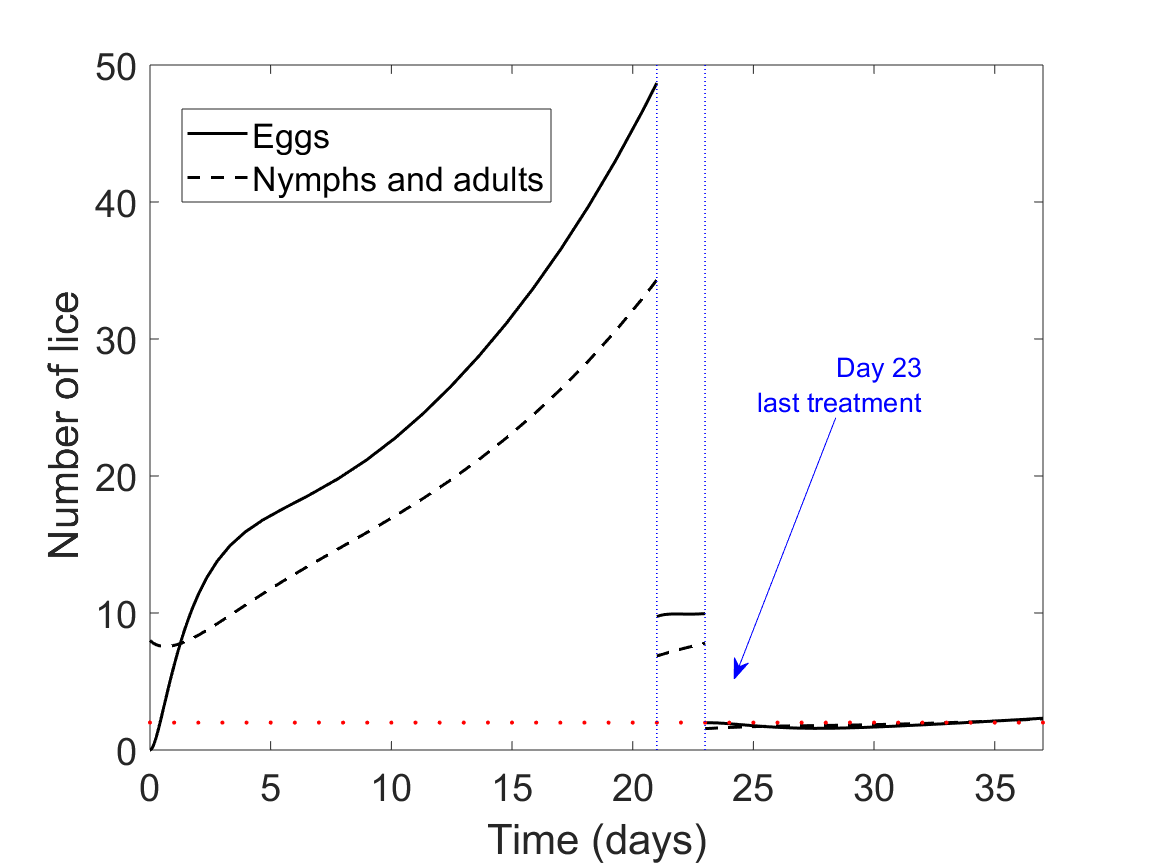}
	\caption{}
	\label{Fig:str2e}			
\end{subfigure}
\begin{subfigure}{0.47\textwidth}
	\includegraphics[width=\textwidth]{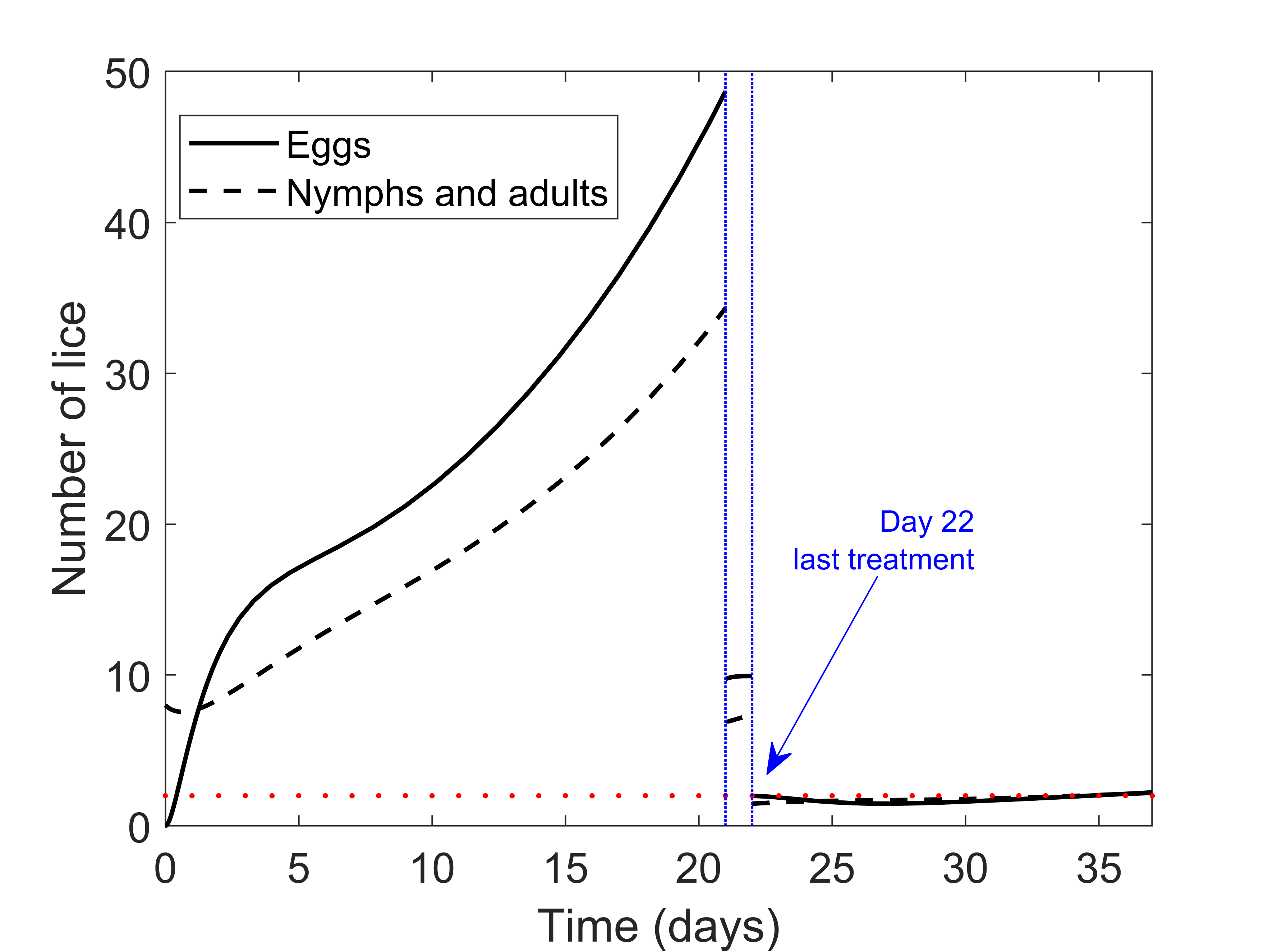}
	\caption{}
	\label{Fig:str2f}
\end{subfigure}
	\caption{Strategy nr. 2. Evolution in time of a lice colony which develops from a small group of adults and is treated with conditioners and combs. Starting from day 21 (first treatment) since the beginning of the infection, lice are treated every second day (a,c,e) or every day (b,d,f). Web combing it is assumed to eliminate (a,b) 20\%, (c,d) 50\% or (e,f) 80\% of lice/eggs.}
	\label{Fig:Strategy2}
\end{figure}	

\noindent \textbf{Strategy nr. 3: Combined treatments (shampoos and combs).} 
In strategies nr.1 and nr.2, shampoos and wet combing were applied separately. Guidelines for head lice treatment have previously suggested to combine different products, applying a shampoo every seven days and a lice comb every two days between one shampoo and the next~\cite{queensland2019}. We simulate (Fig.~\ref{Fig4a}) the effect of such a combined strategy, assuming that starting from detection lice are treated three times with shampoo (day 21, 28 and 35) and further with low/moderately effective wet combining (killing 20/50\% of live lice/eggs) every second day between two shampoos (days 23, 25, 30 and 32).
This alternate treatments strategy is effective when combing is removing 50\% of live lice and nits, whereas it is not when combing effectiveness is low, compare Figs.~\ref{Fig:Strategy3}(a,b). If combing effectiveness is low, but the hair is combed more often, e.g. four times, between two topical treatments, the strategy could also be considered effective as after the last shampoo the lice population remains for two weeks below the detection threshold (Fig.~\ref{Fig4c}).\\
\ \\ 

\begin{figure}[h!]
	\centering
	\begin{subfigure}{0.32\textwidth}
		\includegraphics[width=\textwidth]{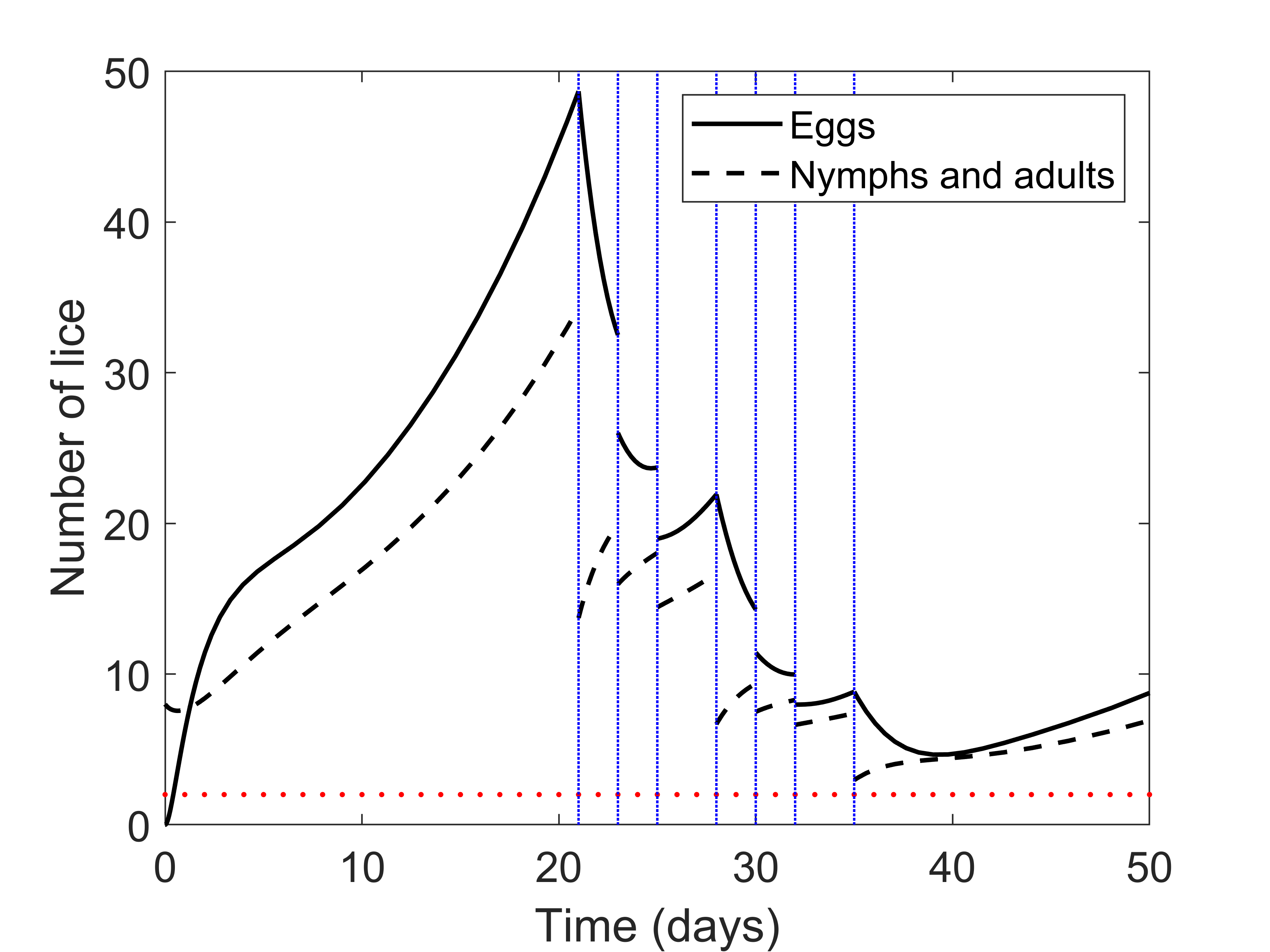}
		\caption{}
		\label{Fig4a}			
	\end{subfigure}
	\begin{subfigure}{0.32\textwidth}
	\includegraphics[width=\textwidth]{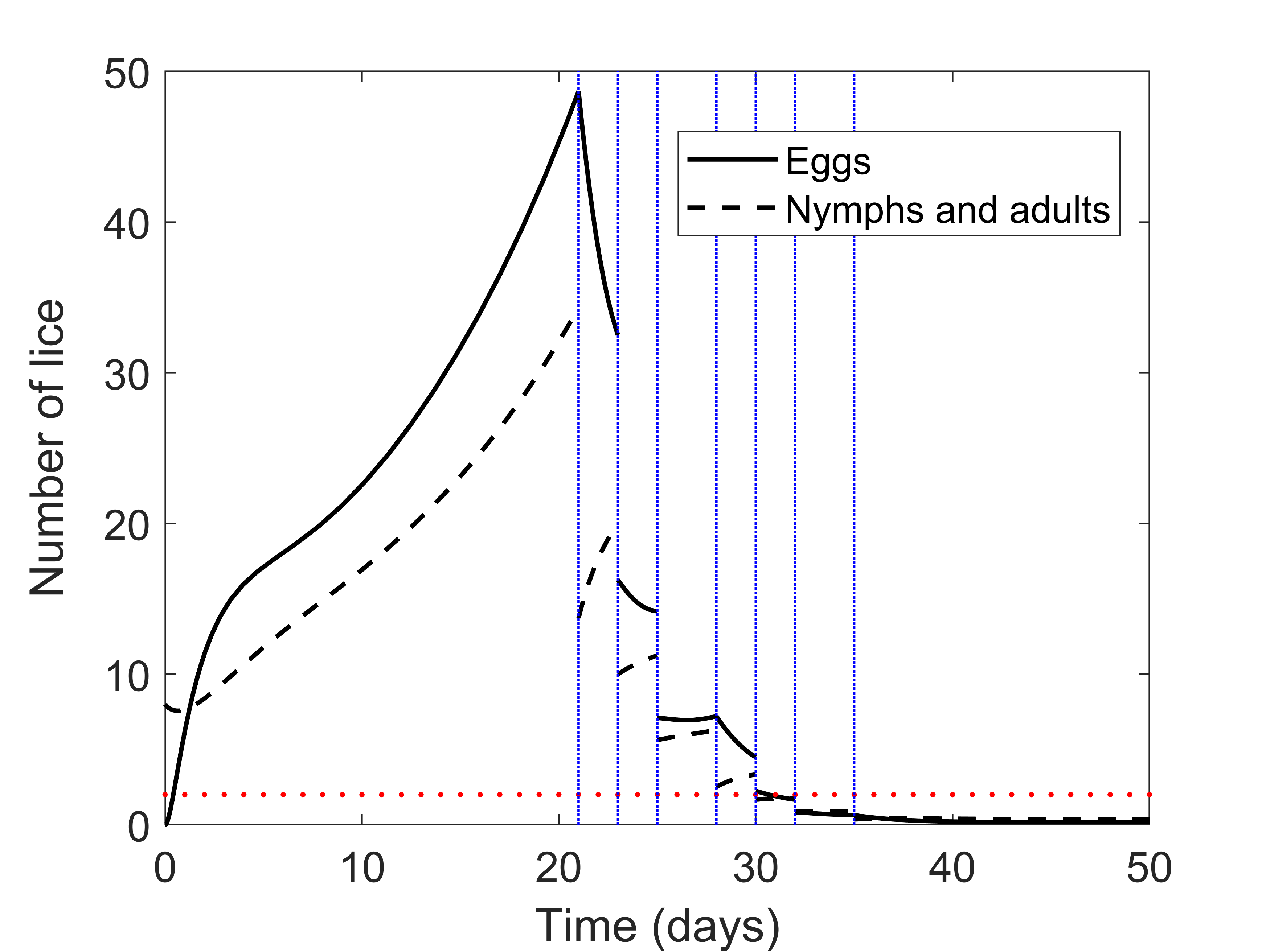}
	\caption{}
	\label{Fig4b}			
\end{subfigure}
\begin{subfigure}{0.32\textwidth}
	\includegraphics[width=\textwidth]{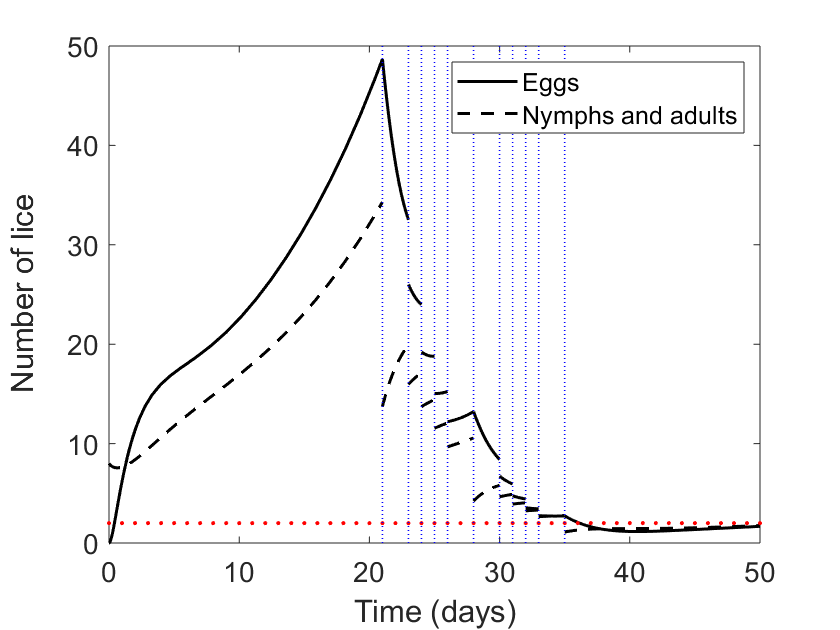}
	\caption{}
	\label{Fig4c}
\end{subfigure}
	\caption{Strategy nr. 3. Evolution in time of a lice colony which develops from a small group of adults and is treated with non-ovocidal topical treatments alternated to wet combing. Starting from detection lice are treated with shampoo at days 21, 28 and 35 (killing 60\% of live lice) in alternation with wet combining (a,b) on days 23, 25, 30 and 32, respectively (c) on days 23, 24, 25, 26, 30, 31, 32, 33. Combing effectiveness was assumed to be (a,c) low  (eliminating 20\% of live lice/eggs), or (b) moderate (eliminating 50\% of live lice/eggs).}
	\label{Fig:Strategy3}
\end{figure}

\noindent \textbf{Strategy nr. 4: Dimeticones-based treatments.} Dimeticones are silicone oils which have been recently employed in anti-head lice compounds. When applied on a louse, dimeticones enter into the spiracles, interrupt oxygen supply and lead to rapid death of the insect \citep{Heukelbach2008}. Two kind of dimeticones have been recently studied (see \cite{Feldmeier2012} and references thereof): (i) Hedrin$^{\textregistered}$, 4\% dimeticones solution, which showed 70\%-92\% efficacy on treating lice infestations, despite being ineffective on eggs, and (ii) NYDA$^{\textregistered}$, a combination of two dimeticones which is also ovocidal (95-100\% eggs killed). 
Being non-ovocidal, treatments with Hedrin$^{\textregistered}$  can be associated to the previously simulated strategies nr. 1 and nr. 3 (the latter, if combined with wet combing). In contrast, NYDA$^{\textregistered}$  was proposed as a good candidate for a two-application treatment, with a recommended second treatment 8-10 days after the first one~\cite{Cummings2018, Feldmeier2012}. Assuming that NYDA$^{\textregistered}$  eliminates 80\% of nymphs and adults and 97\% of eggs, we simulate two treatments with NYDA$^{\textregistered}$, the first at detection (day 21) and the second at day 31. However, with this treatment schedule the lice population will grow beyond the detection threshold in less than one week (Fig.~\ref{Fig6A}), suggesting that the treatment did not work. The strategy becomes effective when we anticipate the second treatment to day 25 (Fig.~\ref{Fig6B}) because, nymphs and adults being reduced by NYDA$^{\textregistered}$ to 20\% of their values, the population growth slows down importantly and lice remain under the detection threshold for two weeks.

\begin{figure}[h!]
	\centering
	\begin{subfigure}{0.48\textwidth}
		\includegraphics[width=\textwidth]{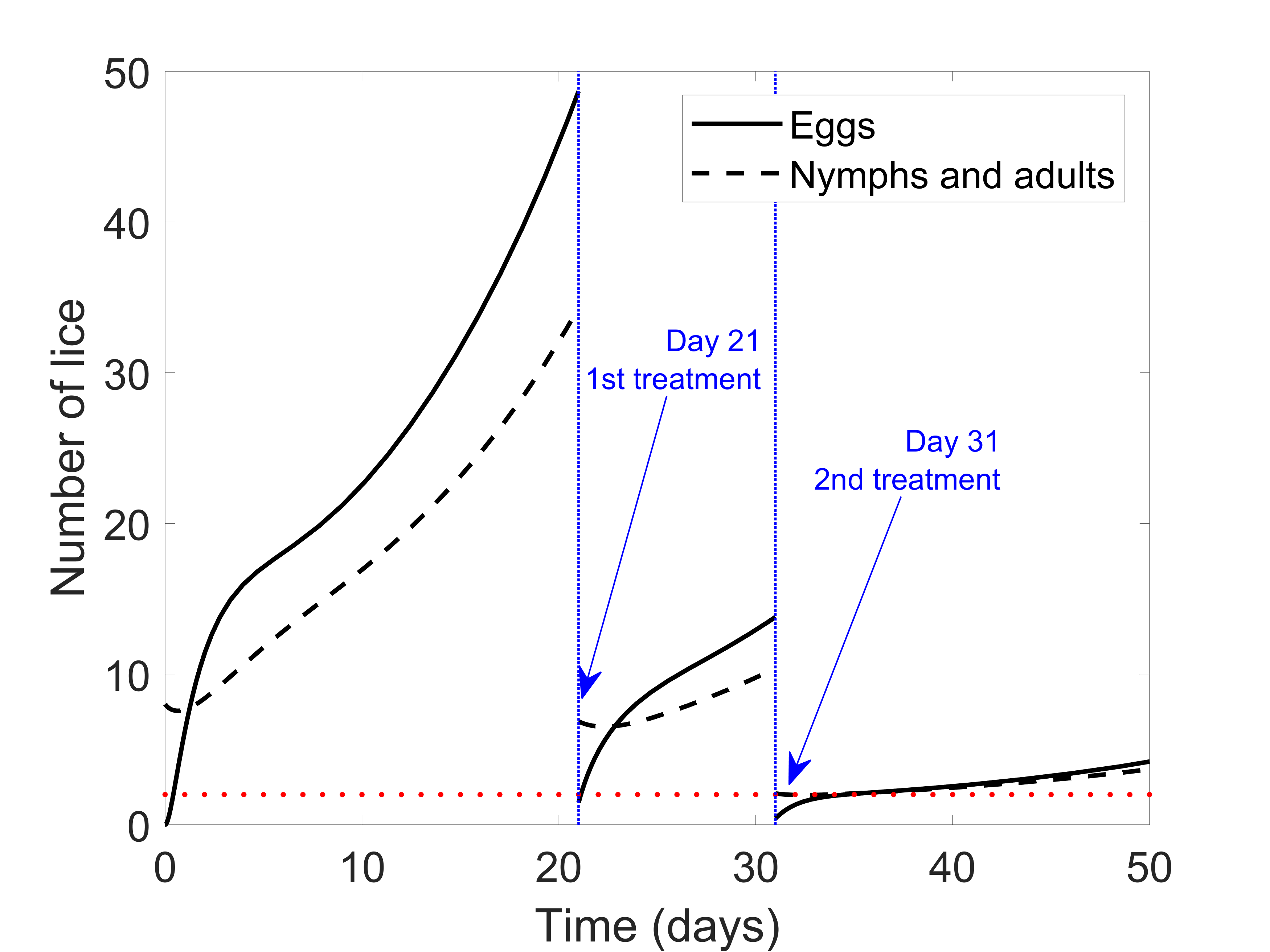}
		\caption{}
		\label{Fig6A}			
	\end{subfigure}
	\begin{subfigure}{0.48\textwidth}
		\includegraphics[width=\textwidth]{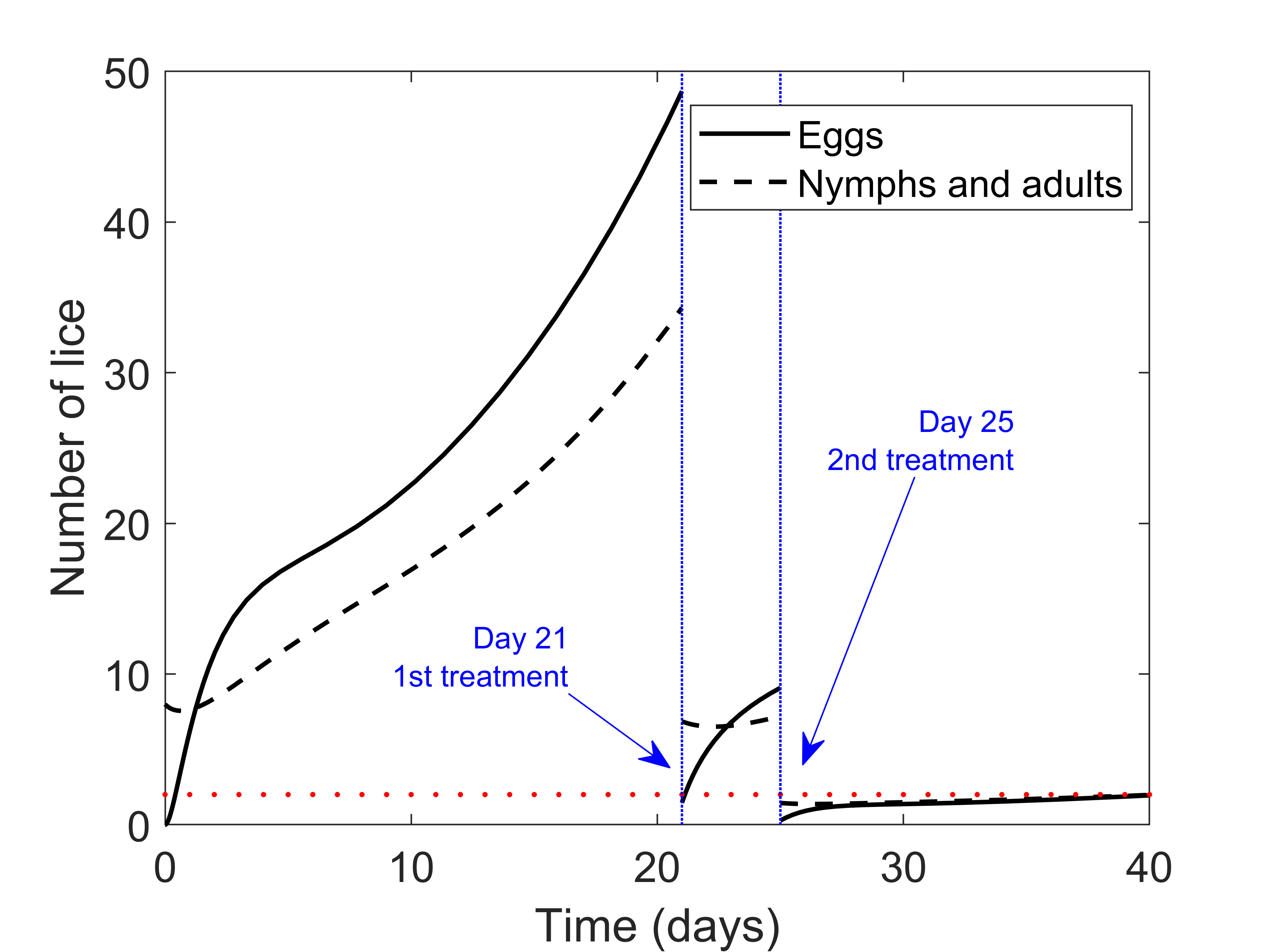}
		\caption{}
		\label{Fig6B}			
	\end{subfigure}
	\caption{Strategy nr. 4. Evolution in time of a lice colony which develops from a small group of adults and is treated with dimeticones (NYDA$^{\textregistered}$). At each application, NYDA$^{\textregistered}$ is assumed to eliminate 80\% of nymphs and adults and 97\% of present eggs. Starting from day 21 (first treatment) since the beginning of the infection, NYDA$^{\textregistered}$ application is repeated (a) after 10 days (day 31) and (b) after 4 days (day 25).}
	\label{Fig:Strategy4}
\end{figure}

\ \\
\noindent One might wonder to what extent our results depend on the timing of the first treatment (assumed to be day 21 in Figs.~\ref{Fig:Strategy1}--\ref{Fig:Strategy4}), or in other words on the population size of eggs ($U_d$) or live lice ($L_d$) at detection.
We let now $U_d$ (respectively, $L_d$) free to vary in the interval [0,200] (respectively, [0,100]), and consider the above presented treatment strategies with variable population size at the time of the first treatment.  We shall distinguish regions of the plane ($U_d,L_d$) indicating weather zero (light yellow), one (dark yellow), two (orange), three (red), or at least four (dark red) treatments are needed to consider the strategy effective. If the initial lice population is very small, because of the local attractiveness property of the lice-free equilibrium (Theorem~\ref{prop:LFE}) the lice population dies out without intervention, hence no treatment needs to be applied. When a large amount of eggs and/or live lice is present then at least one treatment is necessary to eradicate the infestation. The number of treatments necessary to define the strategy effective depends on the applied product and on the scheduling. We visualize in Fig.~\ref{Fig:DependInitCond} lice treatments with
(a) highly effective topical products eliminating 90\% of life lice applied once every 7 days; (b) moderately effective topical products eliminating 60\% of life lice, applied once every 7 days; (c) moderately effective comb eliminating 50\% eggs/lice, applied every second day; (d) NYDA$^{\textregistered}$ applied every 9 days and (e) NYDA$^{\textregistered}$ applied every 4 days. Whereas at most three applications of topical products would be effective in most cases when the product is killing 90\% of nymphs and adult lice (Fig.~\ref{Fig:DependInitCond}a), four or more treatments become necessary to eradicate the infestation when moderately effective shampoos or combs are used (Fig.~\ref{Fig:DependInitCond}(b,c)).
As NYDA$^{\textregistered}$ is assumed to be very effective against eggs, the number of applications necessary to eradicate the infestation depends essentially on the number $L_d$ of nymphs and adult lice at detection. Three applications 9 days apart from each other allow to eliminate lice as long as $L_d<60$ (Fig.~\ref{Fig6d_NT}). In considering strategy nr. 4 we found that NYDA$^{\textregistered}$ becomes more efficient if applied every 4 days, rather than every 8-10 (Fig.~\ref{Fig:Strategy4}b). Fig.~\ref{Fig6e_NT} confirms our findings indicating that at most three NYDA$^{\textregistered}$ applications 4-days apart are sufficient to eradicate the infestation for 
$L_d\in [0,100]$.

\begin{figure}[h!]
	\centering
	\begin{subfigure}{0.32\textwidth}
		\includegraphics[width=\textwidth]{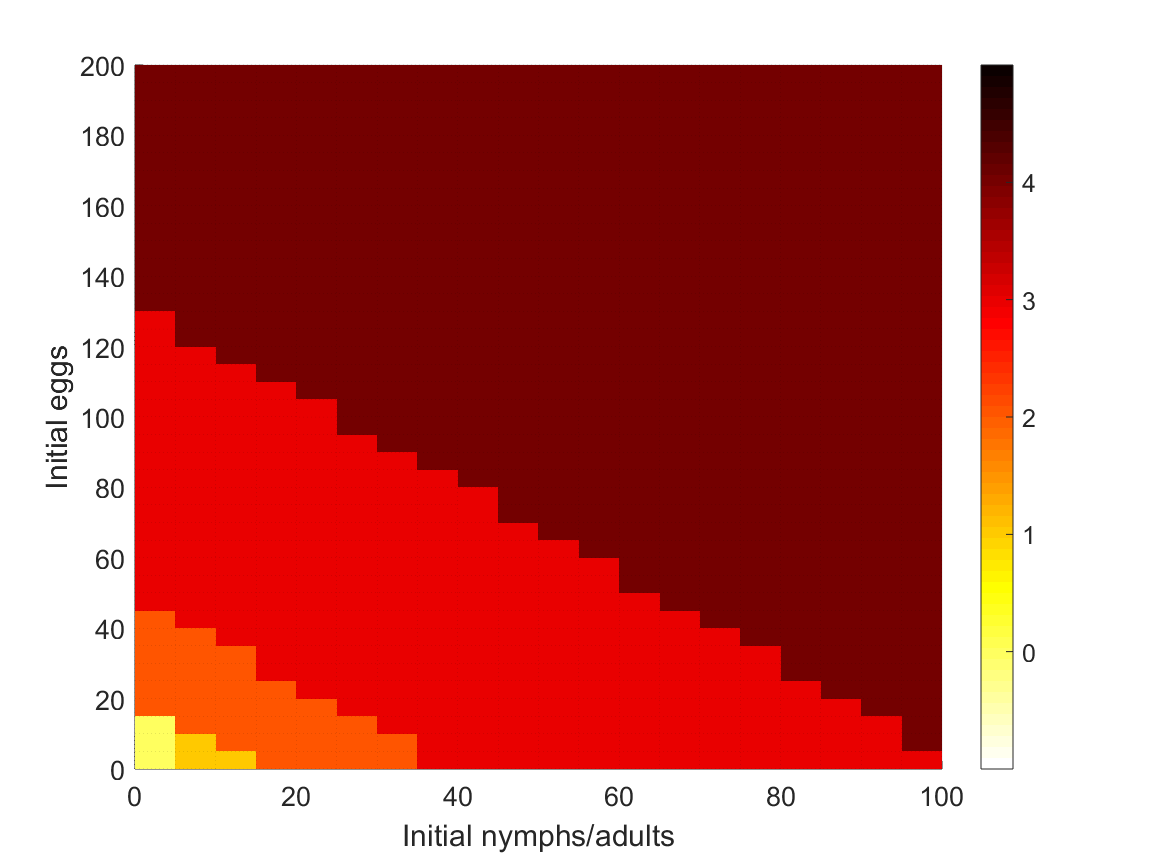}
		\caption{}
		\label{Fig6a_NT}			
	\end{subfigure}
	\begin{subfigure}{0.32\textwidth}
		\includegraphics[width=\textwidth]{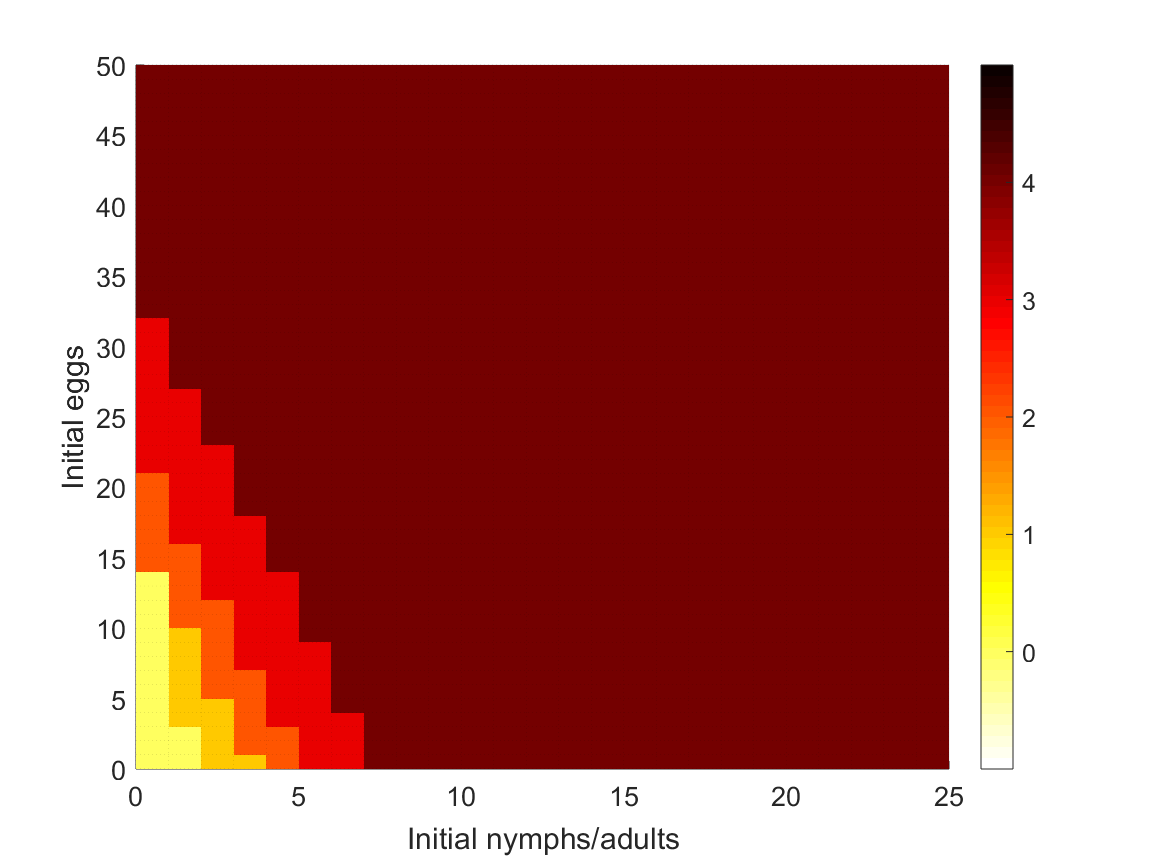}
		\caption{}
		\label{Fig6b_NT}			
	\end{subfigure}
\begin{subfigure}{0.32\textwidth}
	\includegraphics[width=\textwidth]{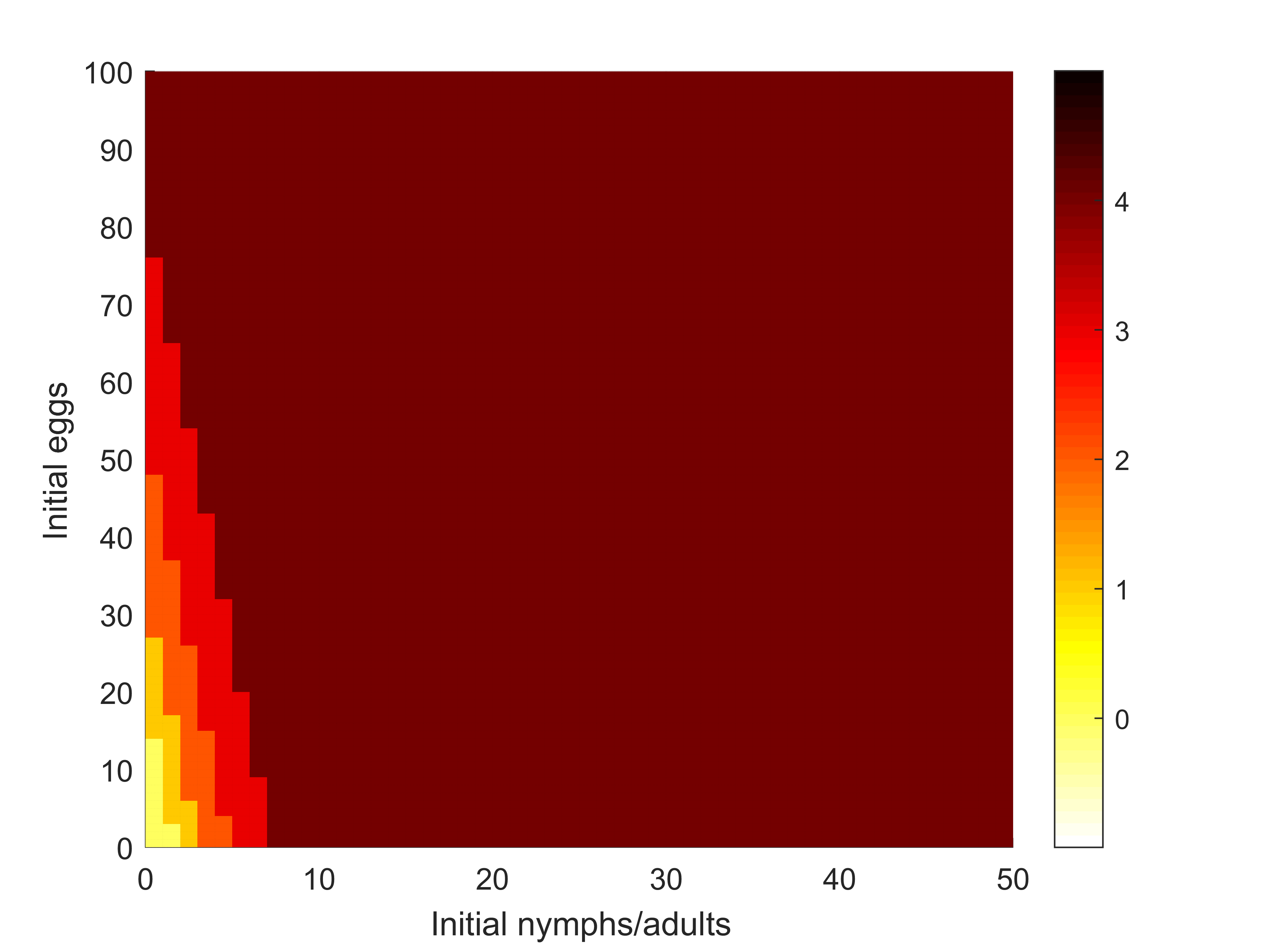}
	\caption{}
	\label{Fig6c_NT}			
\end{subfigure}
\begin{subfigure}{0.48\textwidth}
	\includegraphics[width=\textwidth]{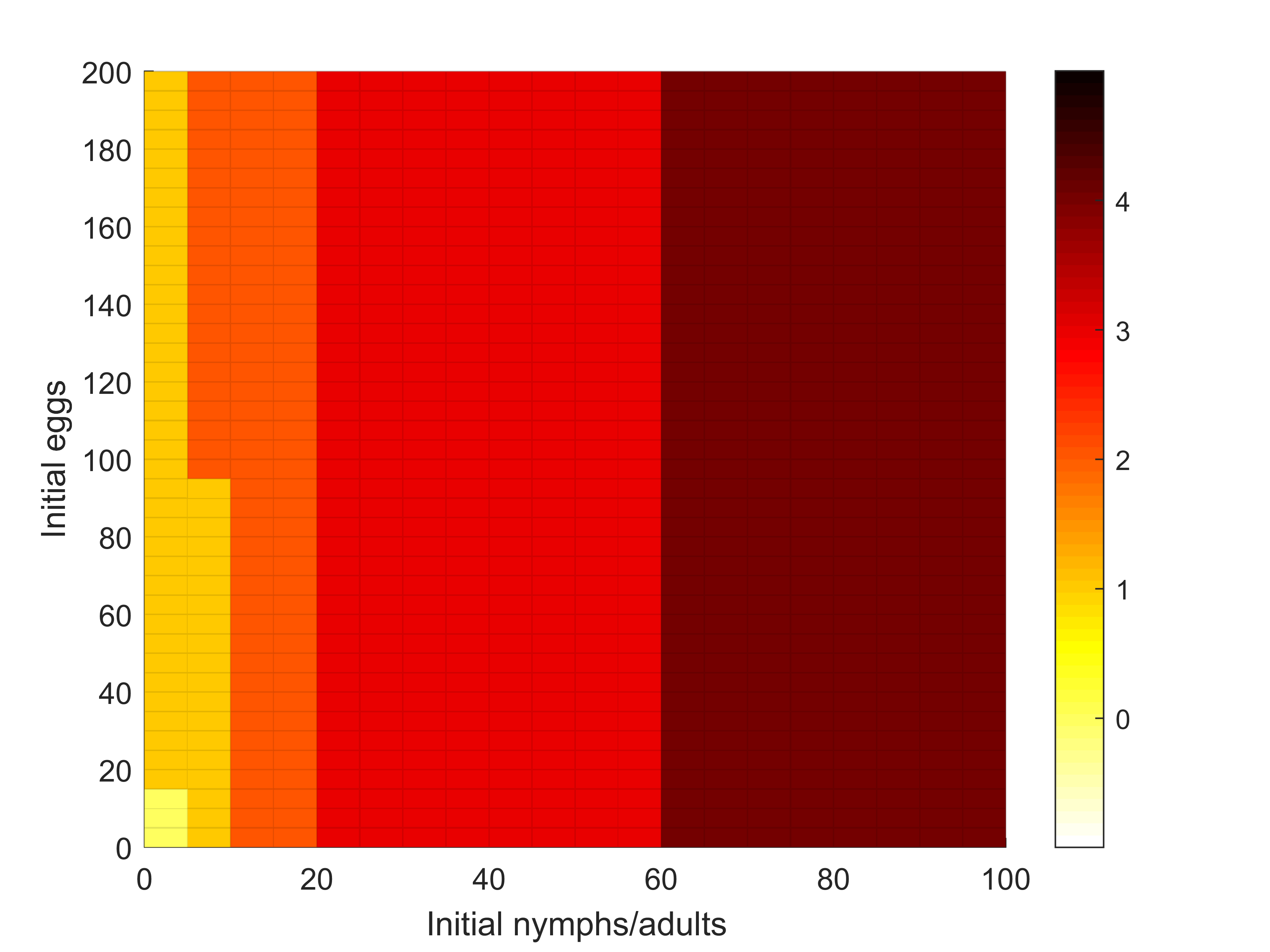}
	\caption{}
	\label{Fig6d_NT}			
\end{subfigure}
\begin{subfigure}{0.48\textwidth}
	\includegraphics[width=\textwidth]{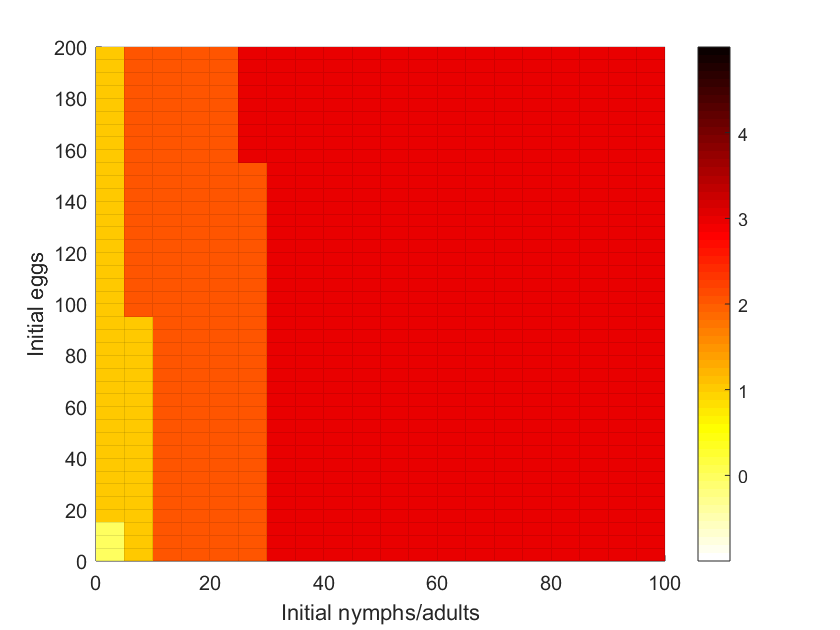}
	\caption{}
	\label{Fig6e_NT}			
\end{subfigure}
	\caption{The severity of a lice infestation at detection affects the number of treatment applications necessary for lice elimination. The panels visualize therapy with: (a) topical treatment eliminating 90\% nymphs/adults, applied once every 7 days; (b) topical treatment eliminating 60\% nymphs/adults, applied once every 7 days; (c) combing eliminating 50\% eggs/lice, applied every second day; (d) NYDA$^{\textregistered}$ applied every 9 days; (e) NYDA$^{\textregistered}$ applied every 4 days. Color code corresponds to no (yellow), one (dark yellow), two (orange), three (red) and four or more (dark red) applications necessary to eradicate the infestation with the corresponding strategy.}
	\label{Fig:DependInitCond}
\end{figure}

\section{Discussion}
\label{sec:discussion}
Understanding the life cycle of head lice is an important step in knowing how to treat lice infestations, as the parasite behavior depends considerably on its age and gender. To this purpose we have proposed a mathematical model for a population structured by age and gender formulated as a system of PDEs~\eqref{model1_eq:PDEfemales}--\eqref{eq:initcond_pde}, which can be reduced to compartmental systems of delay differential equations~\eqref{model1_eq:DDE_all_inout} or ordinary differential equations~\eqref{ModelODE}. The latter was used to include treatments against head lice, which are differently eliminating eggs and nymphs/adult lice. To the best of our knowledge, besides the pioneer work by Laguna and coauthors~\cite{Laguna2011}, this is a quite unique study which proposes a mechanistic mathematical model for understanding the biology of the life cycle of head lice and attesting the efficiency of different treatments in eradicating lice infestations.\\ 
\ \\
Fundamental properties of the ODE model~\eqref{ModelODE} were studied in Section~\ref{sec:analysis}. Beside existence, uniqueness and nonnegativity of solutions we have considered existence and stability of equilibria of the dynamical system. Our results show that in case of a quarantined infected host, there might be either no lice (infection free equilibrium $P_0$) or an heterogeneous population with lice in all life stages. $P_0$ is locally asymptotically stable, hence small perturbations of this equilibrium might not lead to lice infection, even if untreated (cf. Fig.~\ref{Fig:DependInitCond}). Further, the analytical results suggest that there is no stationary state in which only juvenile or only adult lice sub-populations survive. Provided that the reproduction, maturation and survival parameters of the lice population satisfy $\RM>1$ and $\RW>1$, then the coexistence equilibrium $P_1$ exists, but it is unstable (Theorem~3) and if not treated, the lice population would grow uncontrolled (Fig.~\ref{Fig:No_treatment}). If the host is not isolated and lice transmission among infected hosts is possible, then there might exist two nontrivial equilibria $P_{2,3}$ (Theorem 4).\\
\ \\
By mean of computer experiments and numerical simulations we have studied (Sect.~\ref{sec:treat_simulax}) four possible treatments against head lice, namely topical non-ovocidal treatments (Strategy nr.~1), wet combing (Strategy nr.~2), combination of the two (Strategy nr. 3), and dimeticone-based products (Strategy nr. 4). For all products different efficacy and application schedules were studied. No  product  was assumed to be  100\%  successful in removing eggs, nymphs and adult lice, as this is technically not feasible~\cite{Speare2002,Feldmeier2012}. Of course, if such a product would exist re-treatment would not  be  required for isolated hosts.
If a (almost) perfectly working topical treatment which eliminates at least 90\% of live lice is available, then one application every 7 days repeated for three times is sufficient to eradicate moderate to severe infestations (Fig.~\ref{Fig6a_NT}). Relying on the biology of the lice life-cycle, the time gap between applications should not be too short or too long (Fig.~\ref{Fig:Strategy1}(c,d)), but could be relaxed to 14 days. For example, an effective strategy would be to apply a 100\% effective insecticide-based shampoo for three times, with two weeks breaks between one application and the next one (Fig.~\ref{Fig:Strategy1}e). In case of resistant lice, the duration of the treatment and the number of necessary applications increases (see e.g. for 40\% resistance, Fig.~\ref{Fig:shampoo60res} and Fig.~\ref{Fig6b_NT}).
Combing (Strategy nr. 2) is a useful method for detection, but according to our results, it could not be the method of choice to treat and eradicate a lice infestation.
Indeed, unless the lice population at detection is very small and combing is performed very carefully, removing at least half of the present eggs and live lice (Fig.~\ref{Fig6c_NT}), a high number of treatments could be necessary for the host to be lice free and the infestation could relapse within 2 weeks from the last treatment (Fig.~\ref{Fig:Strategy2}). Combining shampoos and combing (Strategy nr. 3) to treat a moderate to severe infestation could be quite time consuming and uncomfortable for the host due to the high number of applications required. If this method is chosen to treat an infestation, our results suggest to use effective products which can effectively remove eggs/live lice (Fig.~\ref{Fig:Strategy3}). Dimeticone-based products, in particular if a new application is repeated 4 days (rather than 8-10 days) after the previous one, allow for a lower number of applications even in case of severe infestations (Figs.~\ref{Fig:Strategy4} and~\ref{Fig:DependInitCond}). Our results indicate that early detection is crucial for quick and efficient eradication. Indeed, the number of treatment applications necessary to eradicate the infestation population increases with increasing eggs/live lice at the time of first treatment (Fig.~\ref{Fig:DependInitCond}).\\
\ \\ 
In Section~\ref{sec:treat_simulax} we have considered the case of a quarantined host. One might ask if treatments which have been shown to be effective for such hosts do also work when the host is not isolated. Let us consider a perfectly working topical treatment (as in Fig.~\ref{Fig:shampoo100}) and a host, say a pupil, who has been found infested with lice. We assume that upon detection the host is quarantined for one or two weeks and is treated with the perfectly working topical product once a week starting with the detection day (day 21), returning to school the day after the second (day 29, Fig.~\ref{Fig7a}) or the third treatment (day 36, Fig.~\ref{Fig7b}). Let us also assume that in the same classroom there is at least another host with undetected or not well treated lice infestation, so that upon returning to school, new lice could be transferred to our initial patient. Fig.~\ref{Fig7} shows that the treatment, which was effective for quarantined hosts (Fig.~\ref{Fig:shampoo100}), is failing for hosts who are at risk of reinfection. As long as the treated host goes back to an infectious environment, lice infestation cannot be eradicated. Based on our computational model results it would be advisable to reproduce a lice-free environment and minimize the reinfection risk, treating at the same time not only the first detected host but also his classmates. This is in agreement with the synchronized treatment strategy proposed in~\cite{Laguna2011,Meister2016}.\\
\ \\
Though the model was parametrized based on available literature on the biology of head lice, their life cycle and the estimated efficacy of different treatments against lice, for certain parameters (such as the transferring rates, see Table~\ref{Table_rates}) no data are available. Henceforth, the proposed mathematical model and resulting numerical simulations are not meant for data fitting but rather for understanding the time evolution of an infestation and predicting the performance of a possible treatment strategy. The sharp detection threshold which was used to assess the performance or determining the conclusion of a treatment could be put into question. The choice of a different value for the threshold would quantitatively modify the results presented here, as observed for the model proposed in~\cite{Laguna2011}. A further limitation of the proposed study lays in the deterministic nature of the model. The deterministic approach used here is appropriate for large populations (such as untreated lice colonies), whereas for populations with very few individuals a stochastic approach would be more suitable. A stochastic model could be adopted to improve the study of borderline cases such as those in Fig.~\ref{Fig6B} where, though the infestation could be considered eradicated, in the long run the lice population increases again. In a further study we plan to improve the modeling approach proposed here by considering a mixed approach of deterministic and stochastic processes, as it has been proposed in other fields of biology~\cite{Kraut2019}.

\begin{figure}[h!]
	\centering
		\begin{subfigure}{0.45\textwidth}
		\includegraphics[width=\textwidth]{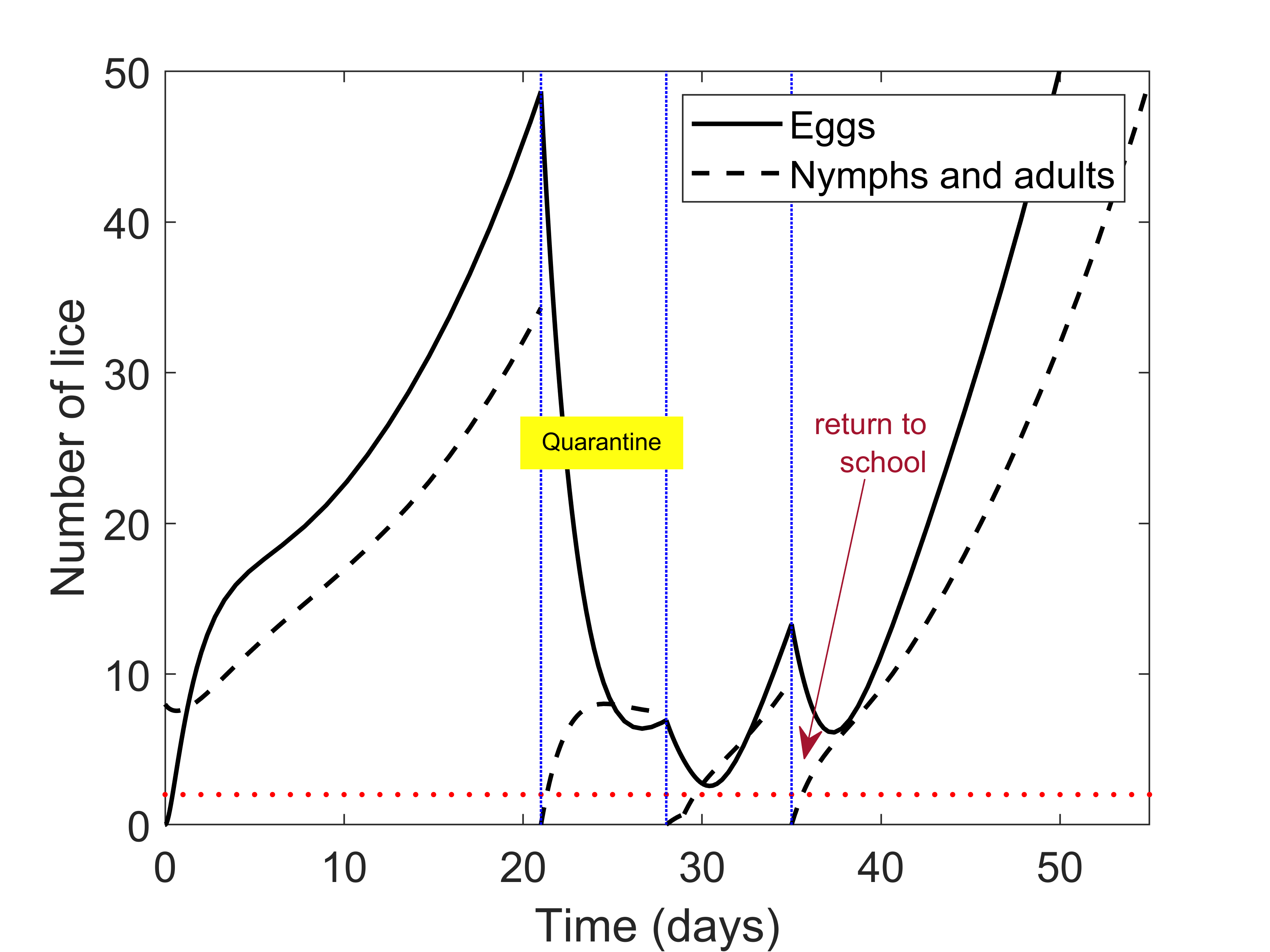}
		\caption{}
		\label{Fig7a}			
	\end{subfigure}
	\begin{subfigure}{0.45\textwidth}
		\includegraphics[width=\textwidth]{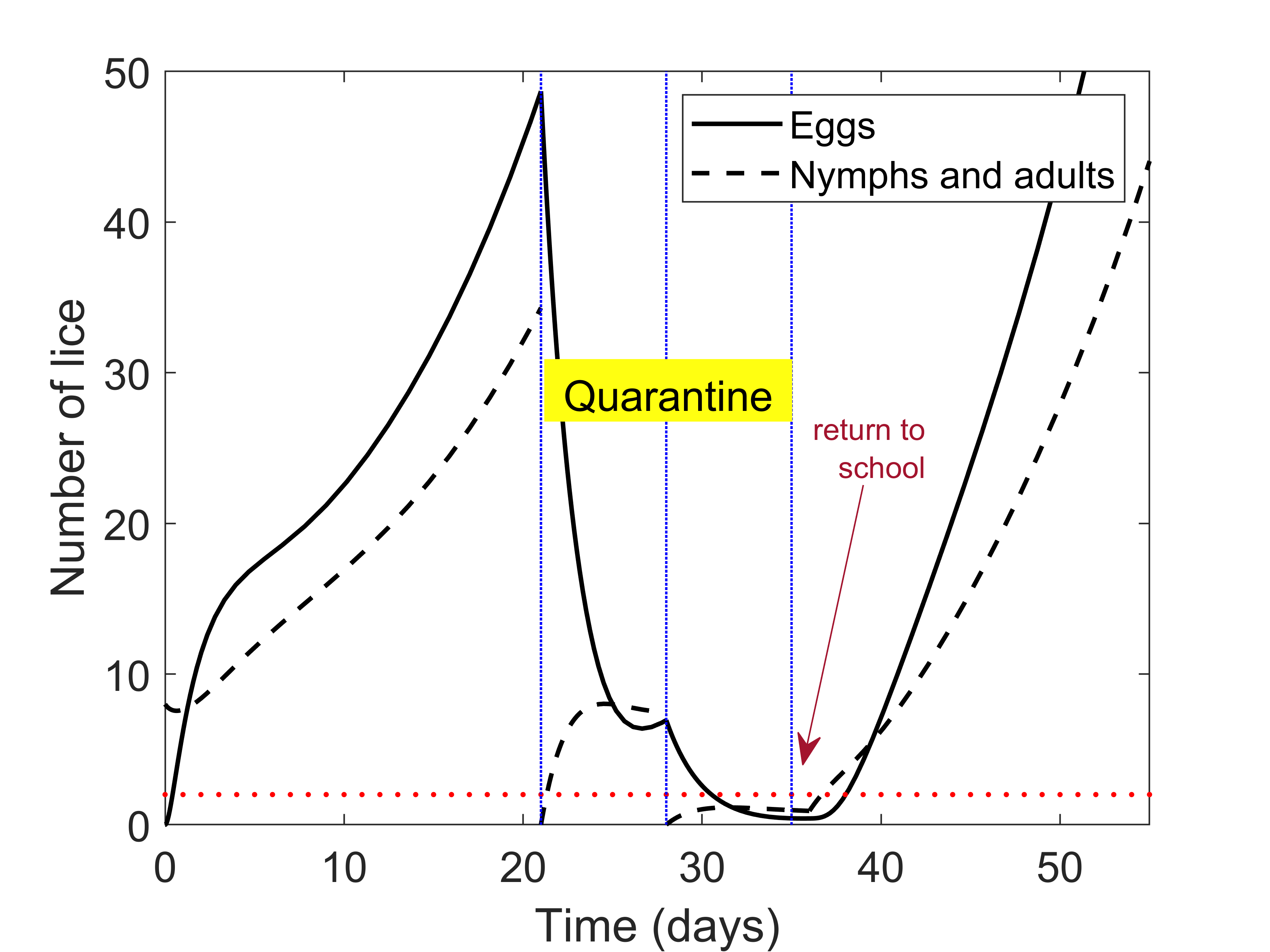}
		\caption{}
		\label{Fig7b}			
	\end{subfigure}
	\caption{Evolution in time of a lice colony which develops from a small group of adults in case of a non-isolated infestation. The host is treated with an ideally working shampoo (cf. Fig.~\ref{Fig:Strategy1}(b)) applied on day 21,28 and 35. Following detection (day 21) the lice are treated and a the host is quarantined for (a) one week, respectively, (b) two weeks. When the quarantine ends, the host returns to school, where others are infested and lice transmission is possible.}
	\label{Fig7}
\end{figure}

\section*{Acknowledgment}
MVB is supported by the European Social Fund and by the Ministry of Science, Research and Arts Baden-W\"urttemberg.

\begin{table}[b] 
	\begin{center}
		\begin{tabular}{c|l|c|c}
		    \hline\\[-0.8em]
			\textbf{Rate} & \textbf{Description} & \textbf{Value [Unit]} & \textbf{Reference}  \\
	    \hline\\[-0.8em]
	   $b_1$ & laid eggs per adult female & 3 [1/day] & \cite{Lebwohl2007,TakanoLee2003} \\
			$\mu_0$ & eggs death rate & 0.35 [1/day]& \cite{TakanoLee2003}\\
			$\mu_N$ & nymphs death rate & 0.195 [1/day]&  \cite{TakanoLee2003}\\
			$\mu_1$ & adult lice death rate & 1/30 [1/day]&  \cite{TakanoLee2003,Lebwohl2007}\\
			$\mu_B$ & breeding females death rate & 1/25 [1/day]& \cite{TakanoLee2003,Lebwohl2007} \\
			$1/\eta$ & egg stage duration & 7 [days] &  \cite{TakanoLee2003}\\
			$1/\omega$ & nymph stage duration & 9 [days] &  \cite{TakanoLee2003}\\
			$r$ & prob. for egg to turn into male louse & 0.367 & \cite{Perotti2004}\\
			$\rho$ & mating rate & 0.9 & \citep{TakanoLee2003,Lebwohl2007}\\
			$\theta$ & prob. for $\WB$ to return to $W$   & 100\% & \cite{Multiple_mating_Mehlhorn,Boutellis}\\
			$1/\alpha$ & duration of breeding stage & 3 [days] & \citep{TakanoLee2003} \\
			$\xi$ & death prob. during mating & 5\% & assumed \\
			$\beta_W$ & transferring rate females (outgoing) & 0.35 [1/day] & assumed \\
			$\beta_M$ & transferring rate males (outgoing) & 0.35 [1/day] & assumed\\
			$\alpha_W$ & transferring rate females (incoming) & {1} [1/day]& assumed \\
			$\alpha_M$ & transferring rate males (incoming) & {1} [1/day] & assumed\\
			\hline
		\end{tabular} 
		\caption{Model parameters, description and values used for numerical simulations.}
		\label{Table_rates}
	\end{center}
\end{table}

\bibliographystyle{abbrv}
\bibliography{NCMVB2019bib}
\end{document}